\newtheorem{definition}{Definition}
\newtheorem{theorem}{Theorem}
\newtheorem{corollary}{Corollary}
\begin{document}
\newcommand{\kw}[1]{{\ensuremath {\mathsf{#1}}}\xspace}
\newcommand{\stitle}[1]{\vspace{1ex} \noindent{\textbf{#1}}}

\SetKwFunction{FButterflies}{UpdateBC}
\SetKwProg{Fn}{Function}{:}{}

\definecolor{c1}{RGB}{42,99,172} %
\definecolor{c2}{RGB}{255,88,93}
\definecolor{c3}{RGB}{255,181,73}
\definecolor{c4}{RGB}{119,71,64} %
\definecolor{c5}{RGB}{228,123,121} %
\definecolor{c6}{RGB}{208,167,39} %
\definecolor{c7}{RGB}{0,51,153}
\definecolor{c8}{RGB}{56,140,139} 
\definecolor{c9}{RGB}{0,0,0} 

\newcommand{\todo}[1]{\textcolor{red}{$\Rightarrow:$ #1}}
\newcommand{\revise}[1]{#1}

\newcommand{\reviseminor}[1]{#1}

\newcommand{\reviseminorminor}[1]{#1}

\newcommand{\reffig}[1]{Figure~\ref{fig:#1}}
\newcommand{\refsec}[1]{Section~\ref{sec:#1}}
\newcommand{\reftable}[1]{Table~\ref{tab:#1}}
\newcommand{\refalg}[1]{Algorithm~\ref{alg:#1}}
\newcommand{\refeq}[1]{Equation~\ref{eq:#1}}
\newcommand{\refdef}[1]{Definition~\ref{def:#1}}

\newcommand{\twitterut}{\textit{Twitter-ut}\xspace}
\newcommand{\editfrwiki}{\textit{Edit-frwiki}\xspace}
\newcommand{\amazonratings}{\textit{AmazonRatings}\xspace}
\newcommand{\movielens}{\textit{Movie-lens}\xspace}
\newcommand{\edititwiki}{\textit{Edit-itwiki}\xspace}
\newcommand{\lastfmband}{\textit{Lastfm-band}\xspace}
\newcommand{\discogs}{\textit{Discogs}\xspace}
\newcommand{\yahoosongs}{\textit{Yahoo-songs}\xspace}
\newcommand{\stackoverflow}{\textit{StackoverFlow}\xspace}
\newcommand{\livejournal}{\textit{LiveJournal}\xspace}
\newcommand{\deliciousui}{\textit{Delicious-ui}\xspace}
\newcommand{\orkut}{\textit{Orkut}\xspace}

\newcommand{\deabc}{\kw{DEABC}}
\newcommand{\fable}{\kw{FABLE}}
\newcommand{\deabcpro}{\kw{DEABC}}
\newcommand{\abacus}{\kw{ABACUS}}
\newcommand{\fleet}{\kw{FLEET3}}
\newcommand{\cas}{\kw{CAS{-}R}}
\newcommand{\mascot}{\kw{MASCOT}}
\newcommand{\triest}{\kw{TRI\grave{E}ST}}
\newcommand{\fm}{\kw{FM}}
\newcommand{\kmv}{\kw{KMV}}

\title{Counting Butterflies over Streaming Bipartite Graphs with Duplicate Edges}

\author{Lingkai Meng\orcidlink{0009-0002-7961-9131}, Long Yuan\orcidlink{0000-0001-8111-0401}, Xuemin Lin\orcidlink{0000-0003-2396-7225},~\IEEEmembership{Fellow,~IEEE}
, Chengjie Li\orcidlink{0009-0005-5372-2335}, Kai Wang\orcidlink{0000-0002-3123-2184}, and Wenjie Zhang\orcidlink{0000-0001-6572-2600}
\thanks{* Long Yuan is the corresponding author.}

\thanks{Lingkai Meng, Xuemin Lin, and Chengjie Li are with Antai College of Economics and Management, Shanghai Jiao Tong University, Shanghai 200052, China (mlk123@sjtu.edu.cn, xuemin.lin@gmail.com, lichengjie@sjtu.edu.cn). 

Kai Wang is with the Data-Driven Management Decision Making Lab, Antai College of Economics and Management, Shanghai Jiao Tong University, Shanghai 200052, China (w.kai@sjtu.edu.cn).

Long Yuan is with Nanjing University of Science and Technology, School of Computer Science and Technology, Nanjing 210094, China (longyuan@njust.edu.cn)

Wenjie Zhang is with University of New South Wales, Sydney, Australia (wenjie.zhang@unsw.edu.au)} 
\thanks{Manuscript received April 19, 2021; revised August 16, 2021.}}

\markboth{Journal of \LaTeX\ Class Files,~Vol.~14, No.~8, August~2021}%
{Shell \MakeLowercase{\textit{et al.}}: A Sample Article Using IEEEtran.cls for IEEE Journals}


\maketitle

\begin{abstract}
Bipartite graphs are commonly used to model relationships between two distinct entities in real-world applications, such as user-product interactions, user-movie ratings and collaborations between authors and publications. A butterfly (a 2×2 bi-clique) is a critical substructure in bipartite graphs, playing a significant role in tasks like community detection, fraud detection, and link prediction. As more real-world data is presented in a streaming format, efficiently counting butterflies in streaming bipartite graphs has become increasingly important. However, most existing algorithms typically assume that duplicate edges are absent, which is hard to hold in real-world graph streams, as a result, they tend to sample edges that appear multiple times, leading to inaccurate results. The only algorithm designed to handle duplicate edges is \fable, but it suffers from significant limitations, including high variance, substantial time complexity, and memory inefficiency due to its reliance on a priority queue. To overcome these limitations, we introduce \deabcpro (Duplicate-Edge-Aware Butterfly Counting), an innovative method that uses bucket-based priority sampling to accurately estimate the number of butterflies, accounting for duplicate edges. Compared to existing methods, \deabcpro significantly reduces memory usage by storing only the essential sampled edge data while maintaining high accuracy. We provide rigorous proofs of the unbiasedness and variance bounds for \deabcpro, ensuring they achieve high accuracy. 
We compare \deabcpro with state-of-the-art algorithms on real-world streaming bipartite graphs. The results show that our \deabcpro outperforms existing methods in memory efficiency and accuracy, while also achieving significantly higher throughput.
\end{abstract}

\begin{IEEEkeywords}
Streaming Bipartite Graph, Butterfly Counting, Duplicate Edge Handling, Priority Sampling
\end{IEEEkeywords}

\section{Introduction}
\label{sec:intro}

Butterfly (a $2\times 2$ biclique) is  the most basic subgraph structure in bipartite graphs, such as $\{\{u_1, u_2\}, \{v_1, v_2\}\}$ and $\{\{u_3, u_5\}, \{v_4, v_5\}\}$ in Figure~\ref{fig:butterfly},  that captures the local connectivity patterns between vertices~\cite{DBLP:conf/bigdata/WangFC14}. Counting the number of butterflies in a given bipartite graph is a fundamental problem in bipartite graph analysis and has a wide range of applications.  For example, in bipartite graph mining~\cite{DBLP:journals/compnet/AksoyKP17,lind2005cycles,robins2004small}, butterfly counting is the core ingredient of the well-known measurement 
\emph{bipartite clustering coefficient}, which equals $4 \times c_{\Join} / c_{\ltimes}$,
where  $c_{\Join}$  and  $ c_{\ltimes}$  represent the number of butterflies and three-paths respectively. 
\reviseminorminor{This measurement reveals how tightly entities are grouped, making it particularly valuable in 
community detection~\cite{zhang2008clustering}, representative graph sampling~\cite{zhang2017clustering}, and studies of social collective behaviors~\cite{david2020herding}.}
Moreover, butterfly counting is crucial for identifying \textit{$k$-bitruss} in bipartite graphs, which represents the largest subgraph where each edge is contained in at least $k$ butterflies \cite{DBLP:conf/icde/Wang0Q0020,valejo2018community,abidi2024searching}. This has important applications in areas such as community detection \cite{dong2021butterfly,weng2022distributed,DBLP:journals/vldb/ZhouWC23,DBLP:conf/kdd/Sanei-MehriST18,DBLP:journals/pvldb/XuZYLDDH22}, spam detection~\cite{fang2021cohesive,DBLP:journals/pvldb/XuZYLDDH22,DBLP:journals/pacmmod/WangLLS0023}, fraud detection~\cite{geetha2022hybrid,samanvita2024fraud}. Given its importance, butterfly counting has attracted significant attention recently \cite{DBLP:conf/bigdata/WangFC14,DBLP:conf/kdd/Sanei-MehriST18,DBLP:journals/pvldb/WangLQZZ19,DBLP:books/crc/22/ShiS22,DBLP:conf/ipps/AcostaLP22,DBLP:journals/pacmmod/WangLLS0023,DBLP:journals/vldb/WangLLSTZ24,DBLP:journals/pvldb/XuZYLDDH22,DBLP:journals/vldb/XiaZXZYLDDHM24,DBLP:conf/infocom/Wang00LH24,DBLP:conf/icde/HeW0LNZ24,DBLP:conf/cikm/Sanei-MehriZST19,DBLP:journals/tkde/LiWJZZTYG22,DBLP:journals/tkdd/SheshboloukiO22,DBLP:conf/icde/PapadiasKPQM24,DBLP:journals/pacmmod/ZhangCWYG23,DBLP:journals/corr/abs-2310-11886}.


\begin{figure}[t]\centering
    \scalebox{0.42}[0.42]{\includegraphics{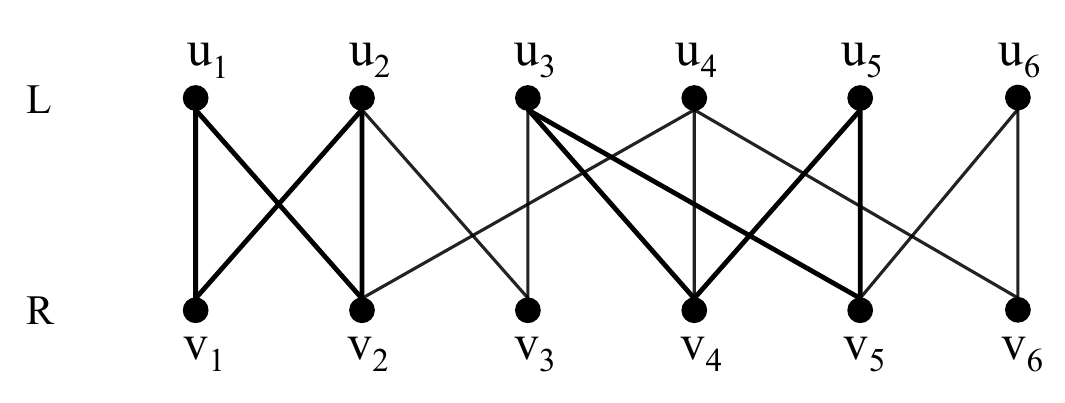}}
    \caption{Butterflies in A Bipartite Graph}
    \label{fig:butterfly}
\end{figure}

\stitle{Motivation.} Real-world bipartite graphs are inherently streaming, with new edges continuously arriving at high speed.   
For example, Alibaba reports that during a peak period in 2017, its customer purchase activities generated 320 PB of log data over just six hours \cite{DBLP:conf/icde/PapadiasKPQM24,DBLP:journals/tkdd/SheshboloukiO22}.  
Given the potentially large or even infinite size of these streams, as well as their high velocity, storing the entire graph in main memory and then counting butterflies afterward is impractical for real-world applications \cite{lin2015scalable,weng2022distributed}. Consequently, a range of memory-efficient one-pass streaming algorithms which can efficiently estimate the number of butterflies in a bipartite graph stream within a user-specified memory limit are proposed in the literature \cite{DBLP:journals/tkde/LiWJZZTYG22,DBLP:conf/icde/PapadiasKPQM24,DBLP:conf/cikm/Sanei-MehriZST19}. However, almost all of existing algorithms assume that edges are unique, an assumption that often fails in real-world bipartite graph stream scenarios \cite{DBLP:journals/pvldb/WangQSZTG17,DBLP:journals/datamine/JungLLK19}. 
\revise{Duplicate edges are commonly present in real-world bipartite graph streams, such as system-level retries due to network instability (e.g., in payment or order systems)~\cite{sun2024fable,tang2016survey,jha2015counting,ives1999adaptive,kandula2008s} and log duplication caused by distributed collection mechanisms in telemetry and monitoring systems~\cite{DBLP:journals/pvldb/WangQSZTG17,DBLP:journals/datamine/JungLLK19}.
However, duplicate edges can lead to incorrect graph analysis if they are not properly handled. 
\reviseminor{Butterfly counting provides the basis for computing classic metrics such as transitivity and clustering coefficient, which are widely used to measure community structure in bipartite networks~\cite{DBLP:conf/kdd/Sanei-MehriST18, DBLP:journals/vldb/WangLQZZ22,DBLP:journals/tkdd/SheshboloukiO22}. These metrics have been applied to tasks such as community detection~\cite{zhang2008clustering}, representative graph sampling~\cite{zhang2017clustering}, and studies of social collective behaviors~\cite{david2020herding}. Besides, clustering coefficient has also been exploited to analyze the predictive performance of deep neural networks~\cite{you2020graph}. However, repeated interactions may cause the same butterfly structure to be counted multiple times, thereby exaggerating the density of communities and distorting cohesiveness measures. For instance, if a single entity forms repeated connections across both sides of the bipartite graph, the resulting butterfly count may misleadingly suggest stronger community structures. Figure~\ref{fig:normal} illustrates that even when a single user $u_1$ generates duplicate edges due to system failures, the clustering coefficient shifts markedly, from 0.675 in (a) to 0.838 in (b).}
Duplicate edges can also obscure anomalies in fraud detection. Dense butterfly patterns are often used to identify coordinated fraudulent behavior~\cite{DBLP:journals/tkde/LiWJZZTYG22,DBLP:conf/cikm/Sanei-MehriZST19,DBLP:journals/pvldb/CaiKWCZLG23,DBLP:conf/icde/PapadiasKPQM24}, but repeated actions by the same user may inflate these structures, thereby concealing true fraud patterns and weakening detection algorithms.
Overall, failure to effectively process duplicate edges introduces significant bias into analysis results. Our experiments in Section~\ref{sec:ee} show that even a duplication rate as low as 1\% can increase relative error by up to an order of magnitude. While one might consider preprocessing to remove duplicates, this is often infeasible in streaming scenarios due to memory limitations and real-time processing demands, as in the case of Alibaba’s large-scale log data.}

\begin{figure}
    \centering

    \vspace{-0.5em}
    
    \begin{minipage}{0.24\textwidth}
        \centering
        \includegraphics[width=0.9\textwidth]{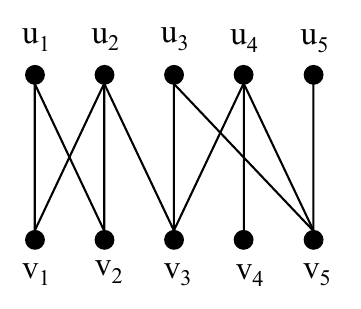}
        \vspace{-0.8em}
        \subcaption{}
    \end{minipage}
    \begin{minipage}{0.24\textwidth}
        \centering
        \includegraphics[width=0.9\textwidth]{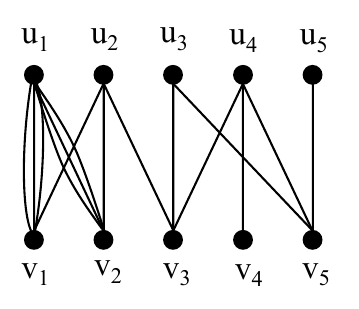}
        \vspace{-0.8em}
        \subcaption{}
    \end{minipage}

  \vspace{-0.5em}
    
    \caption{Impact of Duplicate Edges on Clustering Coefficient ((a) 0.675 vs. (b) 0.838)}
    \vspace{-2em}
    \label{fig:normal}
\end{figure}

Sun et al. propose \fable~\cite{sun2024fable} which is currently the only butterfly counting algorithm that can handle duplicate edges in bipartite graph streaming. However, it still suffers from high
variance and time complexity and requires an additional priority queue to store the priorities of sampled edges. Maintaining this priority queue inevitably increases both time complexity and memory usage.
Motivated by this, in this paper, we study the problem of butterfly counting over streaming bipartite graphs with duplicate edges and aim to devise efficient algorithms that ensure accurate and memory-efficient butterfly counting of large-scale bipartite graph streams.

\stitle{Challenges.} However, it is of great challenge to design such an algorithm for the following reasons:

\begin{itemize}[leftmargin=*]
\item[$\bullet$] \textbf{Memory and computation efficiency.} Considering the large-size and high-speed nature of bipartite graph streams, the designed algorithms must minimize both time complexity and memory consumption while still delivering accurate results. This requirement further exacerbates the complexity of the problem. The state-of-the-art algorithm, \fable, can count butterflies on fixed-size memory, but it requires an additional priority queue to store sampled edges with their priorities, which leads to high memory overhead and time cost.

\item[$\bullet$] \textbf{Ensuring unbiased estimates with low variance.} Achieving unbiased estimates is complicated by the presence of duplicate edges. Without proper sampling that ensures equal probability for each edge, the estimates become skewed. Additionally, variance reflects how much a single estimate value differs from the true value and low variance helps maintain the stability of the results. 
Most existing sampling methods often fail to account for duplicate edges, which compromises the uniformity of the sampling process. This leads to some edges being overrepresented while others are underrepresented, directly affecting the accuracy of butterfly counting. Although \fable ensures unbiasedness, utilizing the \kmv sketch to estimate the number of distinct edges still results in higher variance.

\end{itemize}

\stitle{Our Solution.} To address these challenges, we employ bucket-based priority sampling to maintain a fixed-size edge sample. Using random hashing, duplicate edges are assigned with the same priority, allowing us to effectively handle duplicates without ambiguity (i.e., each edge is sampled with equal probability). Following this idea, we propose memory-efficient \deabcpro (Duplicate-Edge-Aware Butterfly Counting). Specifically, in \deabcpro, we maintain $M$ buckets, where $M$ is the maximum number of edges sampled. Each coming edge is hashed into a fixed number of buckets, and each bucket retains only the edge with the lowest priority. If the bucket is empty or the priority of the new edge is lower than that of the current edge in the bucket, the new edge will replace the existing edge in the bucket. The butterfly count is then updated using a correction factor based on the sampling probability, which is equal to the proportion of sampled edges to all distinct edges. The number of distinct edges is estimated by the Flajolet-Martin sketch. This process ensures that we maintain a uniform random sample of distinct edges. We prove that \deabc provides unbiased estimates with low variance for butterfly counts, even in the presence of duplicate edges. Notably, \deabc avoids the need for extra data structures to track duplicates, due to the hash-based priority sampling.

\revise{While \fm sketch has been widely used for cardinality estimation, our method is the first to integrate it with bucket-based priority sampling for butterfly counting over streaming bipartite graphs with duplicates, completely removing the priority queue structure required by \fable, thereby significantly reducing memory overhead and improving computational efficiency.}

\stitle{Contributions.} We make the following contributions in this paper:
 
\begin{itemize}[leftmargin=*]


\item[$\bullet$] \emph{The new time and memory-efficient algorithms for streaming butterfly counting.} We propose \deabcpro that hashes edges into a fixed number of  buckets, retaining only the highest-priority edge per bucket. \deabcpro only stores the fixed number of edges while still efficiently estimating butterfly counts. Compared to the latest algorithms, \deabcpro shows lower time complexity and less memory usage.

\item[$\bullet$] \emph{Theoretical analysis of the proposed algorithms.} We prove that \deabcpro provides unbiased estimates and establishes upper bounds for its variance. Furthermore, we provide a comprehensive analysis of its time and space complexity. Compared to existing methods, \deabcpro achieves lower variance, faster execution times, and reduced memory consumption.

\item[$\bullet$] \emph{Extensive experimental studies on large real datasets.} We conduct extensive experiments on various real-world streaming bipartite graphs with duplicate edges. The results demonstrate the efficiency, accuracy, and low memory usage of \deabcpro, 
corroborating our theoretical findings. 

\end{itemize}


\section{Problem Definition}
\label{sec:pd}


A bipartite graph  $G = (L, R, E)$  consists of two disjoint vertex sets  $L$  and  $R$  (i.e.,  $L \cap R = \emptyset$ ), and an edge set  $E \subseteq L \times R$. Edges exist only between vertices in $L$  and vertices in $R$; there are no edges within $L$ or within $R$. For a vertex $u \in L$ (respectively, $v \in R$), we denote its set of neighbors as $N[u]$ (respectively, $N[v]$).



\begin{definition}[Butterfly]
\label{def:butterfly}
A butterfly in a bipartite graph  $G = (L, R, E)$ is a  $2 \times 2$  complete bipartite subgraph consisting of vertices $u, w \in L$  and $v, x \in R$, such that all possible edges between these vertices are present, i.e.,
$(u, v),\ (u, x),\ (w, v),\ (w, x) \in E$.
\end{definition}

We use $c_{\Join}$ to denote the count of butterflies in a graph.

\begin{definition}[Bipartite Graph Stream]
\label{def:bipartite_graph_stream}
A bipartite graph stream $\Pi$ is a sequence of edges:
\[
\Pi = \left( e^{(1)},\ e^{(2)},\ \dots,\ e^{(t)},\ \dots \right),
\]
where each $e^{(i)} = \left( u^{(i)}, v^{(i)} \right)$ represents an edge between a vertex $u^{(i)} \in L$ and a vertex $v^{(i)} \in R$ at time $i$. 
The sets $L$ and $R$ would expand over time and there could appear duplicate edges, that is, the same edge $(u, v)$ can appear multiple times in the stream.
\end{definition}

\stitle{Problem Statement.} In this paper, we study the problem of butterfly counting in bipartite graph streams. Specifically, given a bipartite graph stream $G = (L, R, E) = (e^{(1)}, e^{(1)},$ $\dots, e^{(t)})$, we goal is to maintain unbiased butterfly count estimates $\hat{c_{\Join}}$ with low variance while operating under limited memory constraint $M$. In real-world applications, multiple interactions between entities or repeated relationship records often result in the same edge appearing multiple times, known as duplicate edges. Therefore, our objective is to design an algorithm that can accurately count butterflies in bipartite graph streams by considering only the existence of edges, even in the presence of duplicate edges.

For clarity, the notation used throughout this paper is summarized in Table \ref{notation}. To simplify notation, the superscript $(t)$ may be omitted when the context is clear. 

\begin{table}[htbp]
\centering
\caption{Table of Notation}
\small
\begin{tabularx}{0.47\textwidth}{|>{\centering\arraybackslash}c|>{\centering\arraybackslash}X|}
\hline
\textbf{Symbol} & \textbf{Description} \\ \hline
$G = (L, R, E)$ & the bipartite graph \\ \hline
$\Pi$ & bipartite graph stream \\ \hline
$N[u]$ & the set of neighbors of vertex $u$ \\ \hline
$c_{\Join}/\hat{c}_{\Join}$ &  true/estimated number of butterflies\\ \hline
$G_s$ & sampled subgraph \\ \hline
$d_{{max}}$ & the maximum degree of any vertex in the sampled subgraph $G_s$ \\ \hline
$M$ & maximum number of sampled edges \\ \hline
$(u^{(t)}, v^{(t)})$ & an edge $(u, v)$ insertion at time $t$ \\ \hline
$\mathbb{Q}$ & priority queue \\ \hline
$\mathbb{H}_{pri}$ & hash functions that map edges into $Uniform(0, 1)$ \\ \hline
$\mathbb{H}_{pos}$ & hash functions that map edges into integers $1,\cdots,M-1$ uniformly \\ \hline
$h_{max}$ & the maximum priority in the queue $\mathbb{Q}$ \\ \hline
$\theta$ & update factor for butterfly counting \\ \hline
$m^{(t)}$ & true number of edges up to time $t$ \\ \hline
$m_d^{(t)}/\hat{m}_d^{(t)}$ & true/estimated number of distinct edges up to time $t$ \\ \hline
$X_{\Join}$ & number of times butterfly  $\Join$  is counted \\ \hline
$B[i]$ & the $i$-th bucket to store the sampled edges\\ \hline
$|B_{\neq \emptyset}|$ & number of non-empty buckets \\ \hline
\end{tabularx}
\label{notation}
\end{table}

\section{Baseline}
\label{sec:baseline}

In this section, we first review the existing butterfly counting algorithm for the bipartite graph stream with duplicate edges \fable, proposed in~\cite{sun2024fable}, which serves as one of the baseline algorithms in our experimental evaluation. Then we analyze its drawbacks. In addition, we provide an analysis of \fable's time and space complexity for comparison with our algorithm in the following section.

\revise{
The main idea of the \fable algorithm is to maintain a fixed-size sample of edges to estimate the total number of butterflies through hash-based priority sampling.
Each incoming edge is assigned a priority using a hash function, with values drawn from the $Uniform(0, 1)$ distribution.
Duplicate edges are assigned the same priority to eliminate their impact on the sampling probability.
\fable maintains a priority queue of sampled edges up to a predefined size limit.
If the sample size has not reached its maximum, the edge is added to the sample, and the butterfly counts are updated exactly.
If the sample is full, the edge with the highest priority in the sample is evicted when the new edge has a lower priority, and the butterfly counts are updated using a correction factor based on the sampling probability (i.e., the highest priority among sampled edges).
}

\stitle{Algorithm.} For clearness of presentation, we give the pseudocode of \fable algorithm in Algorithm~\ref{algo:fable}, which is equivalent to the original
version in \cite{sun2024fable}. 
\fable maintains a sampled subgraph $G_s$ and estimates the butterfly count as new edges arrive in the stream. Each edge is assigned a priority (line 4), and if the sampled subgraph has not yet reached its maximum size, the edge is added, and the butterfly count is updated accurately (lines 5-9). If $G_s$ is full, \fable compares the priority \( pri \) of the new edge with the current maximum priority \( h_{{max}} \) (lines 10-11). If \( pri < h_{{max}} \) and $(u,v) \notin G_s$, the new edge replaces the edge with the maximum priority in $G_s$ (lines 11-12). The priority queue $\mathbb{Q}$ is updated, and the maximum priority $h_{{max}}$ is recalculated as needed (line 13). When the new edge is sampled \fable utilizes an update factor $\theta$ to accordingly adjust the butterfly count. The update factor is the reciprocal of the probability that all the four edges in a butterfly are selected, i.e., $\theta = \frac{M(M-1)(M-2)(M-3)}{\hat{m}_d(\hat{m}_d-1)(\hat{m}_d-2)(\hat{m}_d-3)}$, where $\hat{m}_d$ represents the estimated number of distinct edges. To estimate the number of distinct edges, \fable introduces the K-Minimum Values sketch (\kmv)~\cite{bar2002counting,beyer2007synopses}, which provides an unbiased estimate with the value of $\hat{m}_d = \frac{M-1}{h_{max}}$, where $h_{max}$ is the $M$-th smallest hash value in $G_S$. The Relative Standard Error of \kmv is bounded by $\frac{1}{\sqrt{M-2}}$.

The procedure to compute the increment in the butterfly count caused by adding a new edge $(u, v)$ is shown in lines~18-23. The butterfly increment \( \hat{c}_{\Join}^+ \) is initialized to zero (line 19), then iterates over the neighbors $w$ of $u$ in  $G_s$, checking for common neighbors $l$ between $w$ and $v$~(lines 20-22). For each butterfly formed, it increments \( \hat{c}_{\Join}^+ \) by the update factor $\theta$~(line~22).

\begin{algorithm}[h]
    \DontPrintSemicolon
    \caption{\fable}
    \label{algo:fable}
    \KwIn{Bipartite graph stream $\Pi$ and maximum number $M$ of edges stored}
    \KwOut{the number of butterflies $\hat{c}_{\Join}$}
    \SetKwComment{comment}{$\triangleright$ }{}

    $G_s \gets \emptyset$; $h_{max} \gets 0$; $\hat{c}_{\Join} \gets 0$;
    
    Initialize the priority queue $\mathbb{Q}$ which stores pairs $(pri, edge)$;

    \For{\textbf{each} edge $(u, v) \in \Pi$}{
             $pri \gets \mathbb{H}_{pri}(u,v)$;\\
             \If{$|G_s| < M$ and $(u,v) \notin G_s$}{
                $\hat{c}_{\Join} \gets \hat{c}_{\Join} +$ \textbf{\FButterflies($(u,v), 1.0$)};\\
                $G_s \gets G_s \cup \{(u, v)\}$;
                $\mathbb{Q}.insert(pri, (u,v))$;\\
                \If{$pri > h_{max}$}{$h_{max} \gets pri$;}
                
             }
             \Else{
                 \If{$pri < h_{max}$ and $(u,v) \notin G_s$}{
                    $(u', v') \gets \mathbb{Q}.top().second$;
                    $G_s \gets G_s /(u', v')$;\\
                    $\mathbb{Q}.pop()$;~$h_{max} \gets \mathbb{Q}.top().first$;\\
                    $\hat{m}_d \gets \frac{M-1}{h_{max}}$;~
                    $\theta \gets \frac{M(M-1)(M-2)(M-3)}{\hat{m}_d(\hat{m}_d-1)(\hat{m}_d-2)(\hat{m}_d-3)}$;\\
                    $\hat{c}_{\Join} \gets \hat{c}_{\Join} +$ \textbf{\FButterflies$((u,v), \theta)$};\\
                    $G_s \gets G_s \cup \{(u, v)\}$;\\
                    
                    $\mathbb{Q}.insert(pri, (u,v))$;\\

                 }
             }   
    }



    \Fn{\FButterflies$((u,v), \theta)$}{
        $\hat{c}_{\Join}^+ \gets 0$;\\
        \For{\textbf{each} $w \in N[u]$}{
            \For{\textbf{each} $l \in N[w] \cap N[v]$}{
                 $\hat{c}_{\Join}^+ \gets \hat{c}_{\Join}^+ + \theta$;
            }    
        }
        \Return $\hat{c}_{\Join}^+$\;
    }
\end{algorithm}

\stitle{Accuracy Analysis.} Sun et al., provided a proof of the unbiasedness of the \fable algorithm in \cite{sun2024fable}, Specifically, $\mathbb{E}[\hat{c}_{\Join}] = c_{\Join}$. \fable also provides estimates of bounded variance, which is defined as:

\begin{equation}
    \mathrm{Var}(\hat{c}_{\Join}) = c_{\Join} \Phi_4^{(t)} + 2\zeta^{(t)} \Phi_8^{(t)} + 2\delta^{(t)} \Phi_7^{(t)} + 2\eta^{(t)} \Phi_6^{(t)}
\end{equation}
where $\zeta^{(t)}, \delta^{(t)}, \eta^{(t)}$ are the number of pairs of butterflies in $G_s$ which share 0, 1, and 2 edges, respectively. And 
$\Phi_4^{(t)} = \frac{1}{\mu_4^{(t)}} - 1,
\Phi_8^{(t)} = \frac{\mu_8^{(t)}}{(\mu_4^{(t)})^2} - 1,
\Phi_7^{(t)} = \frac{\mu_7^{(t)}}{(\mu_4^{(t)})^2} - 1,
\Phi_6^{(t)} = \frac{\mu_6^{(t)}}{(\mu_4^{(t)})^2} - 1$, where $\mu_j^{(t)}$ is defined as:
$
        \mu_j^{(t)} = \frac{M(M-1)\cdots(M-j+1)}{m_d^{(t)}(m_d^{(t)}-1)\cdots(m_d^{(t)}-j+1)}.
$


\stitle{Complexity Analysis.} We analyze the time and space complexity of Algorithm~\ref{algo:fable} as follows.
Note that \emph{{\cite{sun2024fable} does not provide a similar time and space  complexity analysis}} in their paper.

\begin{theorem}
     Algorithm~\ref{algo:fable} takes  $O(m^{(t)} + (M+M\cdot \ln{\frac{m_d^{(t)}+1}{M}}) \cdot (\log M + d_{{max}}^2))$  time to process $t$ elements in the input bipartite graph stream, where $M$ is the maximum number of edges in the sample, $d_{{max}}$ is the maximum degree of any vertex in the sampled subgraph and $m^{(t)}$/$m_d^{(t)}$ is the number of all/distinct edges up to time~$t$.
\end{theorem}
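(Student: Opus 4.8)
The plan is to charge the total running time to two disjoint sources: a constant amount of bookkeeping performed on \emph{every} stream element, and the more expensive work performed only when an edge is actually inserted into the sample $G_s$, an event I will call a \emph{sampling event}. For each of the $m^{(t)}$ arriving edges the algorithm computes a priority $\mathbb{H}_{pri}(u,v)$, tests membership $(u,v)\notin G_s$ against a hash set in $O(1)$ expected time, and compares $pri$ with $h_{max}$; this accounts for the additive $O(m^{(t)})$ term. Every remaining operation---the call to the \textsf{UpdateBC} subroutine (lines 18--23) and the priority-queue operations ($\mathbb{Q}.insert$, $\mathbb{Q}.pop$, $\mathbb{Q}.top$)---is executed only on a sampling event. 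A single \textsf{UpdateBC} call scans $N[u]$ and, for each neighbor $w$, intersects $N[w]$ with $N[v]$, costing $O(d_{max}^2)$; since $\mathbb{Q}$ holds at most $M$ elements, each insert and pop costs $O(\log M)$. Hence each sampling event costs $O(\log M + d_{max}^2)$, and it remains to bound the number $S$ of sampling events.

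Bounding $\mathbb{E}[S]$ is the heart of the argument, the expectation being over the random hash $\mathbb{H}_{pri}$. First I would argue that duplicate edges never trigger a sampling event. Because duplicates receive identical priorities and are rejected by the test $(u,v)\notin G_s$ whenever currently stored, and because once $|G_s|=M$ the threshold $h_{max}$ is non-increasing (evicting the maximum-priority edge and inserting a strictly smaller one can only lower the top of $\mathbb{Q}$), any edge that is rejected or evicted once has priority at least the current $h_{max}$ forever afterwards and is never re-inserted. Consequently $S$ equals the number of \emph{distinct} edges ever inserted into $G_s$, and $\mathbb{Q}$ always holds exactly the $M$ smallest priorities seen so far.

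It then suffices to count distinct-edge insertions under this bottom-$M$ (priority) sampling invariant. The first $M$ distinct edges are always inserted. For $k>M$, the $k$-th distinct edge (in order of first appearance) is inserted iff its priority falls among the $M$ smallest of the first $k$; since the priorities are i.i.d.\ uniform and independent of the arrival order, exchangeability makes this probability exactly $M/k$. Summing and using the harmonic--integral bound $\sum_{k=M+1}^{m_d^{(t)}} 1/k \le \ln\!\big((m_d^{(t)}+1)/M\big)$ yields
\[
\mathbb{E}[S] \;=\; M + M\!\!\sum_{k=M+1}^{m_d^{(t)}}\frac1k \;\le\; M + M\ln\frac{m_d^{(t)}+1}{M}.
\]
Multiplying by the per-event cost $O(\log M + d_{max}^2)$ and adding the $O(m^{(t)})$ baseline gives the claimed bound.

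The main obstacle is the duplicate-handling step: one must verify that the membership test together with the monotonicity of $h_{max}$ genuinely prevents any previously seen edge from being counted a second time, since otherwise $S$ would be governed by the (much larger) number of \emph{all} edges rather than distinct ones. I would also stress that the bound is necessarily in expectation: had the priorities arrived in decreasing order, every distinct edge would be inserted, forcing $S=m_d^{(t)}$; it is precisely the randomness of $\mathbb{H}_{pri}$, through the exchangeability argument, that collapses this worst case to the logarithmic factor.
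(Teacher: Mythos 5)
Your proposal is correct and follows essentially the same decomposition as the paper: an $O(m^{(t)})$ baseline for per-element work, a cost of $O(\log M + d_{max}^2)$ per sampling event, and an expected number of sampling events bounded by $M + M\ln\frac{m_d^{(t)}+1}{M}$ via a harmonic sum. The only difference is how the per-edge insertion probability is obtained — you use exchangeability of the i.i.d.\ priorities to get exactly $M/k$ for the $k$-th distinct edge, while the paper invokes $h_{max}\sim \mathrm{Beta}(M, m_d^{(t)}-M+1)$ and its mean $M/(m_d^{(t)}+1)$ — and your treatment of why duplicates never trigger sampling events is more explicit than the paper's, which simply asserts the probability for distinct edges.
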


\begin{proof}
For each new edge, calculating its priority takes \( O(1) \) time. Processing all $m^{(t)}$ edges requires a total time complexity of $O(m^{(t)})$. Each time an edge is added or removed from \( G_s \), the algorithm also performs an insertion or deletion in priority queue $\mathbb{Q}$, both of which take \( O(\log M) \). If an edge is added to \( G_s \), we need to update the butterfly count. This involves checking all pairs of neighbors of \( u \) and \( v \). In the worst case, this complexity is \( O(d_{{max}}^2) \), where \( d_{{max}} \) is the maximum degree of any vertex in the sampled subgraph.
As for the number of edge updates, when $|G_s| < M$, each edge will be sampled. When $G_s$ has reached $M$, due to $h_{\text{max}} \sim \text{Beta}(M, m_d^{(t)} - M + 1)$, each distinct edge is sampled with probability \(\frac{M}{m_d^{(t)}+1} \). Therefore, the total number of sampling operations is \( \sum_{i=M}^{m_d^{(t)}}\frac{M}{i+1} \approx  M\cdot \ln{\frac{m_d^{(t)}+1}{M}}\) based on the approximation formula for harmonic numbers. Therefore, the overall time complexity of Algorithm~\ref{algo:fable} is $O(m^{(t)} + (M+M\cdot \ln{\frac{m_d^{(t)}+1}{M}}) \cdot (\log M + d_{{max}}^2))$.
\end{proof}


\begin{theorem}
    Algorithm~\ref{algo:fable} has a space complexity of $O(M)$, where $M$ is the maximum number of edges in the sampled subgraph.
\end{theorem}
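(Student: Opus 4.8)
The plan is to account for every persistent piece of state that \fable\ maintains across the stream and show that each is $O(M)$. The only quantities that could in principle grow with the stream length $t$ are the sampled subgraph $G_s$ and the priority queue $\mathbb{Q}$; everything else is a fixed number of scalars, so the whole argument reduces to bounding these two structures and checking that no routine allocates storage that scales with $t$.

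First I would bound $G_s$. By construction the algorithm inserts an edge $(u,v)$ into $G_s$ either when $|G_s| < M$ (line 7) or, once the sample is full, only after first deleting an existing edge $(u',v')$ (line 12) and then re-inserting the new edge (line 16). The guards $(u,v)\notin G_s$ in lines 5 and 11 ensure that no duplicate edge is ever stored, so the number of edges held in $G_s$ never exceeds $M$. Representing $G_s$ by adjacency lists costs space proportional to the sum of vertex degrees, which equals twice the number of stored edges and is therefore $O(M)$; the neighbor lookups in the update subroutine (lines 18--23) allocate no additional persistent storage.

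Next I would bound $\mathbb{Q}$. The key is the invariant $|\mathbb{Q}| = |G_s|$: whenever an edge enters $G_s$ a pair $(pri,(u,v))$ is inserted into $\mathbb{Q}$ (lines 7 and 17), and whenever the full-sample branch removes an edge from $G_s$ it performs a matching $\mathbb{Q}.pop()$ (line 13). Hence $\mathbb{Q}$ holds at most $M$ pairs, each of constant size, for $O(M)$ total. The embedded KMV estimator needs no storage beyond $\mathbb{Q}$ and the scalar $h_{max}$, since $\hat{m}_d$ is recomputed on the fly from $h_{max}$ (line 14). The remaining variables $h_{max}$, $\hat{c}_{\Join}$, $\hat{m}_d$, $\theta$, and $pri$ are each $O(1)$.

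Summing the contributions gives $O(M)+O(M)+O(1)=O(M)$. The only genuinely non-trivial step is establishing the size invariants -- in particular that, once the sample is full, every insertion into $\mathbb{Q}$ is paired with a preceding \texttt{pop} and that the duplicate check keeps $|G_s|\le M$ -- rather than the arithmetic, which is immediate once those invariants are in hand.
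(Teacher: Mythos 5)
Your argument is correct and follows the same route as the paper's proof: bound the sampled subgraph $G_s$ by $M$ edges and the priority queue $\mathbb{Q}$ by $M$ entries, with everything else being $O(1)$ scalars. You simply make explicit the size invariants (the duplicate-check guards and the pop-before-insert pairing) that the paper's one-line proof takes for granted.
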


\begin{proof}
    In Algorithm~\ref{algo:fable}, \fable maintains a subgraph $G_s$  that consists of up to $M$ edges. Additionally, the priority queue $\mathbb{Q}$  stores the priority and the corresponding edge for each of the sampled edges, also limited to $M$ entries. Therefore, the space complexity of Algorithm~\ref{algo:fable} is $O(M)$.
\end{proof}


\stitle{Drawbacks of \fable.}  
\revise{
Although the \kmv sketch used in \fable provides an unbiased estimate of the number of distinct edges, $\hat{m}_d$, its variance can be non-negligible, especially when the sample size $M$ is small. Specifically, $\operatorname{Var}(\hat{m}_d) = \frac{(m_d^{(t)})^2}{M - 2}$, introducing additional randomness not accounted for in \fable's variance analysis. In fact, Sun et al.\ implicitly treats $\hat{m}_d$ as deterministic when updating the butterfly count, thereby potentially underestimating the true estimation error. This omission is particularly problematic in low-memory scenarios, where fluctuations of $\hat{m}_d$ can dominate the overall uncertainty in butterfly estimation.
Furthermore, the \fable method requires a priority queue to manage edge priorities. This design not only increases memory consumption but also imposes additional computational overhead due to the maintenance of the priority queue. These drawbacks make \fable less efficient when processing large-scale graphs or when low-latency performance is required, highlighting the need for alternative approaches that have better accuracy and efficiency.
Our proposed \deabc algorithm adopts the bucket-based Flajolet–Martin (\fm) estimator, which naturally aligns with the reservoir design and reduces the variance to $\operatorname{Var}(\hat{m}_d) = \frac{(m_d^{(t)})^2}{1.4426\, M}$, which is approximately 31\% lower than that of \kmv. Moreover, \deabc eliminates the need for a priority queue, thereby reducing both memory consumption and computational overhead, as detailed in the following section.
}

\section{Our Approach}
\label{sec:deabc+}

\revise{In this section, we introduce our \deabcpro algorithm, which does not require any additional storage structures to maintain the priority queue.}
We first present our algorithm, and then analyze its accuracy and complexity. Finally, we provide a remark to discuss the difference between \fable and \deabcpro.


\subsection{\deabcpro Algorithm}



\revise{The main idea of \deabcpro is to maintain a fixed-size sample of edges using a bucket-based priority sampling strategy, where each “bucket” retains only the edge with the lowest priority. In practice, the bucket serves as an abstract concept, while edges are actually stored using adjacency lists to ensure efficient butterfly counting in terms of both storage and computation.
When a new edge arrives, it is mapped to a bucket via a hash function. If the corresponding bucket is empty or the incoming edge has a smaller priority than the currently stored edge, the bucket is updated (implemented by updating the adjacency list entry), and the edge with higher priority is discarded.
To adjust the butterfly count estimates, \deabcpro computes an update factor based on the estimated number of distinct edges and the number of non-empty buckets, where the number of distinct edges is estimated by averaging the priority values across all buckets. 
Compared to \fable, \deabcpro eliminates the need for additional memory to store edges and their associated priorities in the priority queue, thereby improving memory efficiency.
}

\stitle{Algorithm.} The pseudo-code of \deabcpro is presented in Algorithm~\ref{algo:bc2}. We employ $M$ buckets $B[1], \ldots, B[M]$ to store sampled edges. For any new edge $(u, v)$, it is assigned to a bucket based on the hash function $\mathbb{H}{pos}(u, v)$. If the bucket is empty or the new edge has a smaller priority (determined by $\mathbb{H}{pri}(u, v)$) than the edge currently in the bucket, the bucket is updated with the new edge; otherwise, the edge is discarded (lines 3-5). After an edge update occurs, we need to update the estimated number of distinct edges $\hat{m}_d^{(t)}$ (lines 6-13) and update the butterfly count (lines 14-19). The update factor is defined as follows:
\begin{align*}
\theta =
\begin{cases}
\prod_{i=0}^{3} \dfrac{\hat{m}_d^{(t)} - i}{|B_{\neq \emptyset}| - i}, & \hat{m}_d^{(t)} > 3 \text{ and } |B_{\neq \emptyset}| > 3, \\
1, & \text{otherwise}.
\end{cases}
\end{align*}
where $|B_{\neq \emptyset}|$ represents the number of non-empty buckets and $\hat{m}_d^{(t)}$ is estimated number of distinct edges.

To estimate the number of distinct edges in a streaming bipartite graph with duplicate edges, we employ a method inspired by the HyperLogLog algorithm~\cite{flajolet2007hyperloglog}, which uses the Flajolet-Martin (\fm) sketch~\cite{flajolet1985probabilistic} for cardinality estimation. The \fm sketch consists of an integer-valued vector $b = (b[1], b[2], \ldots, b[M])$ of the length of $M$, where each element only stores the maximum integer value.
The method involves hashing each edge an integer value that follows $Geometric(1/2)$ distribution, and randomly assigns it to a bucket. Let $P(b^{(t-1)})$ is the probability of any bucket $b[i]^{(t-1)}, i \in {1,\cdots, M}$ being updated when a new value $y^{t}$ arrives, which is defined as:
\begin{align*}
P(b^{(t-1)}) &= \frac{1}{M}\sum_{j=1}^{M}P\left( y^{t} > b[j]^{(t-1)}\right)
=\frac{1}{M}\sum_{j=1}^{M}2^{-b[j]^{(t-1)}}.
\end{align*}

Accounting to Ting \cite{DBLP:conf/kdd/Ting14}, the estimated number of distinct edges $\hat{m}_d^{(t)}$ at the end of time $t$ can be expressed as:
\begin{align}
\label{n_edge_t}
\hat{m}_d^{(t)} = \hat{m}_d^{(t-1)} + \frac{\mathds{1}\left(b[i]^{(t)} \neq b[i]^{(t-1)}\right)}{P(b^{(t-1)})} \text{ for any } i \in {1, \cdots, M},
\end{align}
where $\mathds{1}(P) $ an indicator function that returns 1 if the condition $P$ is true, and 0 otherwise.

In our method, we represent $Geometric(1/2)$ distribution using $\rho = -\lfloor\log \mathbb{H}_{pri}(B[pos])\rfloor$
Due to $\mathbb{H}_{pri} \sim Uniform(0, 1)$, $P(\rho = k) = P(k \le -\lfloor\log \mathbb{H}_{pri}(u,v)\rfloor < k+1) =  P(2^{-(k+1)} \le \mathbb{H}_{pri}(u,v) < 2^{-k}) = 2^{-k} - 2^{-(k+1)} = 2^{-(k+1)}$, which equals to $Geometric(1/2)$.
We use a counter $q = \frac{1}{M}\sum_{i=1}^M2^{-\rho}$ to update the value of $\hat{m}_d^{(t)}$.
When the value of $q$ is updated, that is, when the $\rho$ of a new edge is greater than the $\rho_{max}$ of the old edge already assigned to the bucket, we update the value of $\hat{m}_d^{(t)}$ as $\hat{m}_d^{(t)} \gets \hat{m}_d^{(t-1)} + \frac{1}{q}$.

\begin{figure}[t]\centering
    \scalebox{0.35}[0.35]{\includegraphics{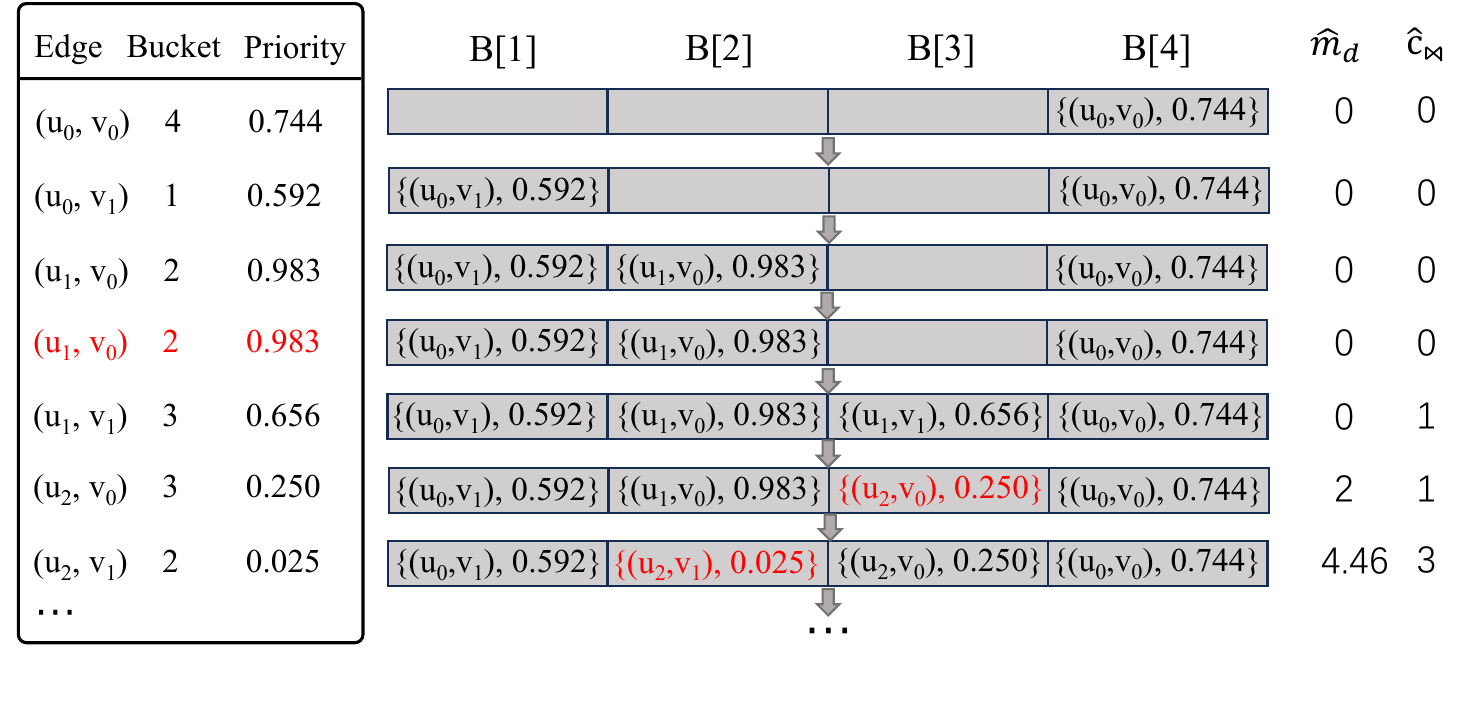}}
     \vspace{-1em}
    \caption{\revise{An Example of \deabcpro Algorithm }}
     \vspace{-1.5em} 
    \label{fig:al}
\end{figure}

\stitle{\revise{\underline{Example:}}}
\revise{Consider a bipartite edge stream depicted in Figure~\ref{fig:al}. The \deabcpro algorithm uses four buckets ($B[1]$ to $B[4]$) for sampling edges. Each incoming edge is assigned to a bucket using a hash function and is associated with a priority value.
When a duplicate edge (e.g., the second $(u_1,v_0)$ in step 4) arrives, it hashes to the same bucket with the same priority. Since the bucket already contains this edge, no replacement or update occurs. 
Each bucket update potentially increments the distinct edge count estimate $\hat{m}_d$.
In Step 7, the edge $(u_2,v_1)$, assigned to bucket $B[2]$, replaces the previous edge $(u_1,v_0)$ due to its lower priority. After this replacement, the estimated number of distinct edges is updated to 4.46 (lines 6–13 in Algorithm~\ref{algo:bc2}). The insertion of edge $(u_2,v_1)$ forms the new butterfly $\{u_0, v_1, u_2, v_0\}$, and based on the update factor $\theta = \frac{4.46(4.46-1)(4.46-2)(4.46-3)}{4(4-1)(4-2)(4-3)} \approx 2.309$, the butterfly count is updated to 3.
}



 
\begin{algorithm}[h]

    \DontPrintSemicolon
    \caption{DEABC}
    \label{algo:bc2}
    \KwIn{graph stream $\Pi$, maximum edge number $M$, and hash functions $\mathbb{H}_{pri}$ and $\mathbb{H}_{pos}$}
    \KwOut{the number of butterflies $\hat{c}_{\Join}$}
    \SetKwComment{comment}{$\triangleright$ }{}

    $\hat{m}_d^{(t)} \gets 0; q = 1; \hat{c}_{\Join} \gets 0$;
    
    Initialize all buckets $B[1,\ldots,M] \gets [\emptyset,\ldots,\emptyset]$ and $|B_{\neq \emptyset}|$ represents the number of non-empty buckets;

    \For{\textbf{each} edge $(u, v) \in \Pi$}{
         $pos \gets \mathbb{H}_{pos}(u, v)$; \\
         \If{$B[pos] = \emptyset$ or $\mathbb{H}_{pri}(u,v) < \mathbb{H}_{pri}(B[pos])$}{
             \If{$B[pos] \neq \emptyset$}{remove the edge in $B[pos]$; $\rho_{max} \gets -\lfloor\log \mathbb{H}_{pri}(B[pos])\rfloor$;}
             \Else{ $\rho_{max} \gets 0$;}

             $\rho \gets -\lfloor\log \mathbb{H}_{pri}(u,v)\rfloor$;

             \If{$\rho > \rho_{max}$}{
                 $\hat{m}_d^{(t)} \gets \hat{m}_d^{(t-1)} + \frac{1}{q}$;\\
                 $q \gets q + \frac{1}{M}(2^{-\rho} - 2^{-\rho_{max}})$;\\
             }

             \If{$\hat{m}_d^{(t)} > 3$ and $|B_{\neq \emptyset}| > 3$}{
                 $\theta \gets \prod_{i=0}^{3} \frac{\hat{m}_d^{(t)} - i}{|B_{\neq \emptyset}| - i}$;
             }

             \Else{$\theta \gets 1$;}

             $\hat{c}_{\Join} \gets \hat{c}_{\Join} +$ \textbf{\FButterflies($(u,v), \theta$)};\\

             $B[pos] \gets (u,v)$;\\
         }
    }


\end{algorithm}

\subsection{Accuracy Analysis}

We analyze the accuracy of \deabcpro, demonstrating that it provides unbiased estimates of the butterfly count with bounded variance.

\subsubsection{Unbiasedness} Using the following theorem, we establish the unbiasedness of the \deabcpro algorithm.

\begin{theorem}
\label{the:unbiasedness2}
The \deabcpro algorithm provides an unbiased estimate of the butterfly count. Specifically, 
$
\mathbb{E}[\hat{c}_{\Join}] = c_{\Join}.
$
where \( \hat{c}_{\Join} \) is the butterfly count estimate produced by \deabcpro at any time \( t \) and \( c_{\Join} \) is the true butterfly count.
\end{theorem}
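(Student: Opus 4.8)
The plan is to prove Theorem~\ref{the:unbiasedness2} by a Horvitz--Thompson argument together with linearity of expectation. First I would decompose the estimator over individual butterflies: for every butterfly $\Join$ formed by the distinct edges of the stream, let $X_{\Join}$ be the total amount it contributes to $\hat{c}_{\Join}$, namely the value of $\theta$ at the instant $\Join$ is counted, or $0$ if $\Join$ is never counted. Then $\hat{c}_{\Join} = \sum_{\Join} X_{\Join}$, and by linearity it suffices to show $\mathbb{E}[X_{\Join}] = 1$ for each butterfly; summing over the $c_{\Join}$ butterflies gives $\mathbb{E}[\hat{c}_{\Join}] = c_{\Join}$.

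Next I would establish three structural facts about the bucket sampling. (i) Since $\mathbb{H}_{pos}$ and $\mathbb{H}_{pri}$ depend only on the edge identity, every copy of a duplicate edge is mapped to the same bucket with the same priority; hence the sample maintained by \deabcpro is a function of the \emph{set} of distinct edges alone and is unaffected by duplication. (ii) Each butterfly is counted at most once: an edge is admitted to its bucket only on its first arrival and, once displaced by a strictly lower-priority edge in the same bucket, can never re-enter (its priority is fixed), so the four edges of $\Join$ are simultaneously resident only at the unique instant the last of them is admitted, which is exactly when the butterfly-update routine records $\Join$. (iii) Because each distinct edge draws its $(\mathbb{H}_{pos},\mathbb{H}_{pri})$ pair independently from the same distribution, the resulting sample is an \emph{exchangeable} random subset of the distinct edges whose size is $|B_{\neq\emptyset}|$; consequently, conditioned on $|B_{\neq\emptyset}| = k$, every $k$-subset is equally likely to be the sample, so the probability that all four edges of a fixed butterfly are sampled equals $\prod_{i=0}^{3} \frac{k-i}{m_d^{(t)}-i}$.

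Combining (iii) with the definition of $\theta$ shows that $\theta$ is precisely the reciprocal of this four-edge inclusion probability, except that it uses the online estimate $\hat{m}_d^{(t)}$ in place of the true count $m_d^{(t)}$. If $\theta$ were built from the exact $m_d^{(t)}$, the proof would close immediately: conditioning on $|B_{\neq\emptyset}|=k$ and using (ii), $\mathbb{E}[X_{\Join} \mid |B_{\neq\emptyset}|=k] = \theta(k)\cdot\prod_{i=0}^{3}\frac{k-i}{m_d^{(t)}-i} = 1$, and averaging over $k$ preserves the value $1$.

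The hard part is therefore the plug-in estimate $\hat{m}_d^{(t)}$: it is random, and since $\theta$ is a nonlinear (quartic) function of it, unbiasedness of $\hat{m}_d^{(t)}$ alone does not suffice --- what is actually needed is that the falling factorial $\hat{m}_d^{(t)}(\hat{m}_d^{(t)}-1)(\hat{m}_d^{(t)}-2)(\hat{m}_d^{(t)}-3)$ has expectation $m_d^{(t)}(m_d^{(t)}-1)(m_d^{(t)}-2)(m_d^{(t)}-3)$. I would discharge this by conditioning the counting events on the Flajolet--Martin distinct-count sketch: the sketch is driven by $\mathbb{H}_{pri}$ and its update rule in Equation~\ref{n_edge_t} is exactly Ting's martingale construction~\cite{DBLP:conf/kdd/Ting14}, which is designed so that the relevant inverse-probability increments have the correct conditional expectation. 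Verifying this decoupling between the distinct-count estimator and the residency of the four butterfly edges, and confirming that the incremental count at the counting instant of (ii) indeed carries the inclusion probability of (iii), is the crux of the argument; once it is in place, taking expectations factor-by-factor yields $\mathbb{E}[X_{\Join}] = 1$ and hence Theorem~\ref{the:unbiasedness2}.
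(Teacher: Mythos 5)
Your overall route is the same as the paper's: decompose $\hat{c}_{\Join}=\sum_{\Join}X_{\Join}$, argue that conditioned on $|B_{\neq\emptyset}|=k$ the four-edge inclusion probability is $\prod_{i=0}^{3}\frac{k-i}{m_d^{(t)}-i}$ so that a correction factor built from the true $m_d^{(t)}$ would give $\mathbb{E}[X_{\Join}]=1$ exactly, and then handle the plug-in estimate $\hat{m}_d^{(t)}$. Your structural facts (i)--(iii) are actually argued more carefully than in the paper, which simply imports the inclusion probability from prior work. The genuine gap is that you stop precisely at the step that carries all of the difficulty: you correctly observe that what is needed is $\mathbb{E}\bigl[\prod_{i=0}^{3}(\hat{m}_d^{(t)}-i)\bigr]=\prod_{i=0}^{3}(m_d^{(t)}-i)$ together with a decoupling of $\hat{m}_d^{(t)}$ from the residency of the four edges, and then defer both to ``verification.'' Neither is automatic. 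Ting's martingale construction yields $\mathbb{E}[\hat{m}_d^{(t)}]=m_d^{(t)}$, but unbiasedness of a random variable does not transfer to unbiasedness of a degree-four polynomial of it, and the sketch is driven by the same $\mathbb{H}_{pri}$ values that decide which edges survive in the buckets, so the independence you invoke is not free either. As written, the proposal does not prove the theorem.

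For what it is worth, the paper does not close this gap exactly either: it sets $f_1(\hat{m}_d^{(t)})=\mathbb{E}[X_{\Join}\mid\hat{m}_d^{(t)}]=\prod_{i=0}^{3}\frac{\hat{m}_d^{(t)}-i}{m_d^{(t)}-i}$, performs a first-order Taylor expansion of $f_1$ at $m_d^{(t)}$, uses $\mathbb{E}[\hat{m}_d^{(t)}]=m_d^{(t)}$ to annihilate the linear term, and silently discards the higher-order remainder, which is of order $\mathrm{Var}(\hat{m}_d^{(t)})/(m_d^{(t)})^2$ relative to $1$. What is actually established is therefore first-order (approximate) unbiasedness rather than the exact identity $\mathbb{E}[\hat{c}_{\Join}]=c_{\Join}$. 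To finish your argument in the spirit of the paper you should make that delta-method approximation explicit instead of hoping for an exact falling-factorial identity; an exact proof would require either a distinct-count estimator whose falling factorials are unbiased or a differently normalized correction factor, neither of which is available here.
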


\begin{proof}

To prove that the estimate of the number of butterflies \( \hat{c_{\Join}} \) produced by our \deabc is unbiased, we need to show that \( \mathbb{E}[\hat{c}_{\Join}] = c_{\Join} \), where \( c_{\Join} \) is the true number of butterflies in the graph. We define a random variable \( X_{\Join} \), which represents the number of times the butterfly \( \Join \) is counted, as follows:
\begin{align}
\label{butterfly_random_variable}
X_{\Join}=
\begin{cases}
\prod_{i=0}^{3} \frac{\hat{m}_d^{(t)} - i}{|B_{\neq \emptyset}| - i} & \Join \text{is counted}, \\
0 & \text{otherwise}.
\end{cases}
\end{align}

Based on Equation \ref{n_edge_t}, we can obtain that the expectation of $\hat{m}_d^{(t)}$ at time $t$ is, 

\begin{align*}
\mathbb{E}\left[\hat{m}_d^{(t)}\right] &= \mathbb{E}\left[\hat{m}_d^{(t-1)}\right]
+ \mathbb{E}\left[\frac{\mathds{1}\left(b[i]^{(t)} \neq b[i]^{(t-1)}\right)}{P(b^{(t-1)})}\right] \\
&=\mathbb{E}\left[\hat{m}_d^{(t-1)}\right] 
+\frac{\frac{1}{M}\sum_{j=1}^{M}P\left( b[j]^{(t)} > b[j]^{(t-1)}\right)}{P(b^{(t-1)})} \\
&=\mathbb{E}\left[\hat{m}_d^{(t-1)}\right] + 1.
\end{align*}

When $t=1$, $\mathbb{E}\left[\hat{m}_d^{(1-1)}\right] = m_d^{(0)} = 0$.
When a new distinct edge arrives, the expectation of the number of distinct edge count increases by 1, therefore, we can ultimately derive recursively that $\mathbb{E}\left[\hat{m}_d^{(t)}\right] = m_d^{(t)}$ for any $t$, implying that the estimation of the number of distinct edges is unbiased. 


According to~\cite{DBLP:journals/pvldb/WangQSZTG17}, when a butterfly $\Join$ is counted at the end of time $t$, 
the probability that all four edges forming $\Join$ are sampled is as follows:
\[
P(\Join) = \frac{|B_{\neq \emptyset}|(|B_{\neq \emptyset}|-1)(|B_{\neq \emptyset}|-2)(|B_{\neq \emptyset}|-3)}{{m_d^{(t)}}({m_d^{(t)}}-1)({m_d^{(t)}}-2)({m_d^{(t)}}-3)}.
\]
where $|B_{\neq \emptyset}|$ is the number of non-empty buckets, and ${m_d^{(t)}}$ is the number of distinct edges.

And then,
\begin{align*}
\mathbb{E}[X_{\Join}\mid \hat{m}_d^{(t)}] &= X_{\Join}\cdot P(\Join) = \prod_{i=0}^{3} \frac{\hat{m}_d^{(t)} - i}{m_d^{(t)} - i} .
\end{align*}


Let $f_1(\hat{m}_d^{(t)}) = \mathbb{E}[X_{\Join}\mid \hat{m}_d^{(t)}]$, and perform a first-order Taylor expansion at $m_d^{(t)}$:
\[
f_1(\hat{m}_d^{(t)}) = f_1(m_d^{(t)}) + f_1'(m_d^{(t)})(\hat{m}_d^{(t)} - m_d^{(t)}) + O(h).
\]

Therefore, 
\begin{align*}
\mathbb{E}[X_{\Join}] &= \mathbb{E}[f_1(\hat{m}_d^{(t)})] = f_1(m_d^{(t)}) 
= \prod_{i=0}^{3} \frac{m_d^{(t)} - i}{m_d^{(t)} - i} = 1.
\end{align*}

$\mathbb{E}[\hat{c}_{\Join}] = \sum_{\Join \in G} \mathbb{E}[X_{\Join}] = |c_{\Join}| \times 1 = c_{\Join}$. Therefore, Our algorithm’s estimate of the number of butterflies in Algorithm \ref{algo:bc2} is unbiased.

\end{proof}

\subsubsection{Variance} The variance of the \deabcpro algorithm is stated in the following theorem.
\begin{theorem}
\label{the:variance2}
The variance of the \deabcpro algorithm's butterfly count estimate \( \hat{c}_{\Join} \) is as follows:
\begin{align*}
\mathrm{Var}(\hat{c}_{\Join})
&= c_{\Join}  \Phi_1'  + 2\zeta^{(t)}  \Phi_2' + 2\delta^{(t)}  \Phi_3' + 2\eta^{(t)} \Phi_4' - c_{\Join}^2 \\
&+ (g'(m_d^{(t)}))^2\cdot \frac{(m_d^{(t)})^2}{1.4426M}.
\end{align*}
where $\zeta^{(t)}, \delta^{(t)}, \eta^{(t)}$ are the number of pairs of butterflies in $G_s$ which share 0, 1, and 2 edges up to time $t$, respectively. $g(\hat{m}_d^{(t)})=\frac{\hat{m}_d^{(t)}(\hat{m}_d^{(t)}-1)(\hat{m}_d^{(t)}-2)(\hat{m}_d^{(t)}-3)}{m_d^{(t)}(m_d^{(t)}-1)(m_d^{(t)}-2)(m_d^{(t)}-3)}\cdot c_{\Join}$. And
\[
\Phi_1' = \frac{m_d^{(t)}(m_d^{(t)}-1)(m_d^{(t)}-2)(m_d^{(t)}-3)}{|B_{\neq \emptyset}|(|B_{\neq \emptyset}|-1)(|B_{\neq \emptyset}|-2)(|B_{\neq \emptyset}|-3)},
\]
\[
\Phi_2' = \Phi_1' \cdot \frac{(|B_{\neq \emptyset}| -4) (|B_{\neq \emptyset}|-5)  (|B_{\neq \emptyset}|-6) (|B_{\neq \emptyset}|-7)}{(m_d^{(t)} - 4)  (m_d^{(t)}-5)  (m_d^{(t)}-6)  (m_d^{(t)}-7)},
\]
\[
\Phi_3' = \Phi_1' \cdot \frac{(|B_{\neq \emptyset}| -4) (|B_{\neq \emptyset}|-5)  (|B_{\neq \emptyset}|-6)}{(m_d^{(t)} - 4)  (m_d^{(t)}-5)  (m_d^{(t)}-6) },
\]
\[
\Phi_4' = \Phi_1' \cdot \frac{(|B_{\neq \emptyset}| -4) (|B_{\neq \emptyset}|-5)}{(m_d^{(t)} - 4)  (m_d^{(t)}-5)}.
\]

\end{theorem}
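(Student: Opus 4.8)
The plan is to build the whole argument on the law of total variance, using the Flajolet--Martin estimate $\hat m_d^{(t)}$ as the conditioning variable. This is the natural split because $\hat c_{\Join}$ carries two distinct sources of randomness: the bucket-based priority sampling, which decides which butterflies are counted and fixes the number of non-empty buckets $|B_{\neq \emptyset}|$, and the FM sketch, which supplies the random value $\hat m_d^{(t)}$ entering the correction factor $\theta$. I would therefore write
\[
\mathrm{Var}(\hat c_{\Join}) = \mathbb{E}\bigl[\mathrm{Var}(\hat c_{\Join}\mid \hat m_d^{(t)})\bigr] + \mathrm{Var}\bigl(\mathbb{E}[\hat c_{\Join}\mid \hat m_d^{(t)}]\bigr),
\]
and identify the first (``within'') term with the sampling variance and the second (``between'') term with the variance injected by estimating $m_d^{(t)}$.

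For the within term I would expand $\hat c_{\Join}=\sum_{\Join} X_{\Join}$ and compute $\mathbb{E}[\hat c_{\Join}^2\mid \hat m_d^{(t)}]=\sum_{\Join}\mathbb{E}[X_{\Join}^2\mid\hat m_d^{(t)}]+\sum_{\Join\neq\Join'}\mathbb{E}[X_{\Join}X_{\Join'}\mid\hat m_d^{(t)}]$, conditioning also on $|B_{\neq \emptyset}|$ as in the unbiasedness proof. The diagonal terms use $\mathbb{E}[X_{\Join}^2\mid\hat m_d^{(t)}]=\bigl(\prod_{i=0}^{3}\frac{\hat m_d^{(t)}-i}{|B_{\neq \emptyset}|-i}\bigr)^2 P(\Join)$; replacing $\hat m_d^{(t)}$ by its mean $m_d^{(t)}$ to leading order collapses the product to $\Phi_1'$, so the diagonal sums to $c_{\Join}\Phi_1'$. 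For the off-diagonal terms I would classify a pair of distinct butterflies by the number $s\in\{0,1,2\}$ of shared edges, so their union has $8-s$ distinct edges and both are sampled with probability $\prod_{i=0}^{7-s}\frac{|B_{\neq \emptyset}|-i}{m_d^{(t)}-i}$; multiplying by the squared correction factor and again setting $\hat m_d^{(t)}\to m_d^{(t)}$ yields exactly $\Phi_2',\Phi_3',\Phi_4'$ for $s=0,1,2$, giving the sums $2\zeta^{(t)}\Phi_2'$, $2\delta^{(t)}\Phi_3'$, $2\eta^{(t)}\Phi_4'$ (the factor $2$ accounting for ordered pairs). Finally $(\mathbb{E}[\hat c_{\Join}\mid\hat m_d^{(t)}])^2$ evaluates to $c_{\Join}^2$ at $\hat m_d^{(t)}=m_d^{(t)}$, contributing the $-c_{\Join}^2$.

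For the between term, observe that the conditional mean is precisely $\mathbb{E}[\hat c_{\Join}\mid\hat m_d^{(t)}]=c_{\Join}\prod_{i=0}^{3}\frac{\hat m_d^{(t)}-i}{m_d^{(t)}-i}=g(\hat m_d^{(t)})$, the function introduced in the statement. Applying the delta method (a first-order Taylor expansion of $g$ about $m_d^{(t)}$, consistent with the expansion already used for unbiasedness) gives $\mathrm{Var}(g(\hat m_d^{(t)}))\approx (g'(m_d^{(t)}))^2\,\mathrm{Var}(\hat m_d^{(t)})$. I would then substitute the variance of the FM/HyperLogLog-style streaming cardinality estimator of Ting~\cite{DBLP:conf/kdd/Ting14}, namely $\mathrm{Var}(\hat m_d^{(t)})=\frac{(m_d^{(t)})^2}{1.4426\,M}$ with $1.4426\approx 1/\ln 2$, producing the last term $(g'(m_d^{(t)}))^2\frac{(m_d^{(t)})^2}{1.4426M}$. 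Summing the within and between contributions reproduces the claimed formula.

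The routine part is the combinatorial bookkeeping of the off-diagonal covariances, which parallels the analysis already carried out for \fable. The main obstacle is the careful handling of the two randomness sources together: I must justify that replacing $\hat m_d^{(t)}$ by $m_d^{(t)}$ inside the within term only drops higher-order terms (so that the leading-order $\Phi'$ expressions are exact in the stated sense), and that the delta-method approximation of $\mathrm{Var}(g(\hat m_d^{(t)}))$ is valid; both rely on the unbiasedness $\mathbb{E}[\hat m_d^{(t)}]=m_d^{(t)}$ established earlier and on the concentration of the FM estimator. Importing the precise constant $1/\ln 2$ from Ting's martingale analysis, rather than the larger constant of plain HyperLogLog, is the other point that needs to be stated explicitly.
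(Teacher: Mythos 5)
Your proposal matches the paper's proof essentially step for step: the same law-of-total-variance decomposition conditioned on $\hat m_d^{(t)}$, the same second-moment expansion with the three cases of butterfly pairs sharing $0$, $1$, or $2$ edges yielding $\Phi_2',\Phi_3',\Phi_4'$, the same first-order Taylor (delta-method) treatment of both the within and between terms, and the same substitution of Ting's variance $\frac{(m_d^{(t)})^2}{1.4426M}$ for the FM estimator. No substantive differences to report.
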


\begin{proof}

Considering the bias of $\hat{m}_d^{(t)}$, we can calculate the variance by applying the Law of Total Variance:
\[
\mathrm{Var}(\hat{c}_{\Join}) = \mathbb{E}[\mathrm{Var}(\hat{c}_{\Join}|\hat{m}_d^{(t)})] + \mathrm{Var}(\mathbb{E}[\hat{c}_{\Join}|\hat{m}_d^{(t)}]).
\]

First, we calculate $\mathbb{E}[\mathrm{Var}(\hat{c}_{\Join}|\hat{m}_d^{(t)})]$. According to the definition of variance, we can get:
\begin{align*}
&\mathrm{Var}(\hat{c}_{\Join}\mid \hat{m}_d^{(t)}) = \mathbb{E}[\hat{c}_{\Join}^2\mid \hat{m}_d^{(t)}] - (\mathbb{E}[\hat{c}_{\Join}\mid \hat{m}_d^{(t)}])^2 \\
&= \sum_i \mathbb{E}[X_i^2\mid \hat{m}_d^{(t)}] +\sum_{i \neq j}\mathbb{E}[X_i X_j\mid \hat{m}_d^{(t)}] - (\mathbb{E}[\hat{c}_{\Join}\mid \hat{m}_d^{(t)}])^2 .
\end{align*}

According to the definition of $X_{\Join}$ in Equation~\ref{butterfly_random_variable}, we can calculate:
\begin{align*}
&\mathbb{E}[X_i^2\mid \hat{m}_d^{(t)}] = \left(\prod_{i=0}^{3} \frac{\hat{m}_d^{(t)} - i}{|B_{\neq \emptyset}| - i}\right)^2 \cdot P(\Join) \\
&=\frac{\prod_{i=0}^{3}(\hat{m}_d^{(t)} - i)^2}{\prod_{i=0}^{3}(|B_{\neq \emptyset}| - i)^2} \cdot \frac{|B_{\neq \emptyset}|(|B_{\neq \emptyset}|-1)\cdots(|B_{\neq \emptyset}|-3)}{{m_d^{(t)}}({m_d^{(t)}}-1)\cdots({m_d^{(t)}}-3)}.
\end{align*}

When calculating $\mathbb{E}[X_i X_j]$, we distinguish the following three cases: $(i)$ the butterflies $i$ and $j$ do not share any edge; $(ii)$ the butterflies $i$ and $j$ only share one edge; and $(iii)$ the butterflies $i$ and $j$ share two edges. Note that it is impossible for two distinct butterflies to share three or more edges, because they would be the same butterfly. The probability that both butterflies $i$ and $j$ are counted is,

\begin{align}
P_{c3} = 
\begin{cases}
\frac{|B_{\neq \emptyset}|  (|B_{\neq \emptyset}|-1)  (|B_{\neq \emptyset}|-2)  \cdots  (|B_{\neq \emptyset}|-7)}{m_d^{(t)}  (m_d^{(t)}-1)  (m_d^{(t)}-2)  \cdots (m_d^{(t)}-7)} & case (i), \\
\frac{|B_{\neq \emptyset}|  (|B_{\neq \emptyset}|-1)  (|B_{\neq \emptyset}|-2)  \cdots  (|B_{\neq \emptyset}|-6)}{m_d^{(t)}  (m_d^{(t)}-1)  (m_d^{(t)}-2)  \cdots (m_d^{(t)}-6)} & case (ii), \\
\frac{|B_{\neq \emptyset}|  (|B_{\neq \emptyset}|-1)  (|B_{\neq \emptyset}|-2) \cdots  (|B_{\neq \emptyset}|-5)}{m_d^{(t)}  (m_d^{(t)}-1)  (m_d^{(t)}-2)  \cdots (m_d^{(t)}-5)} & case(iii), \\
0 & \text{otherwise}.
\end{cases}
\end{align}

Then,

\begin{align*}
&\mathbb{E}[X_iX_j\mid \hat{m}_d^{(t)}] = \left(X_{\Join}\right)^2 \cdot \left(p_{1}P_{c3}^{(i)} + p_{2}P_{c3}^{(ii)} + p_{3}P_{c3}^{(iii)}\right) .
\end{align*}
where $p_1, p_2, p_3$ represent the probabilities of case $i, ii, iii$ and $p_1+p_2+p_3 = 1$.

Therefore,
$\mathrm{Var}(\hat{c}_{\Join}\mid \hat{m}_d^{(t)}) = \sum_i \mathbb{E}[X_i^2\mid \hat{m}_d^{(t)}] +\sum_{i \neq j}\mathbb{E}[X_i X_j\mid \hat{m}_d^{(t)}] - (\mathbb{E}[\hat{c}_{\Join}\mid \hat{m}_d^{(t)}])^2$.
Let $f_2(\hat{m}_d^{(t)}) = \mathrm{Var}(\hat{c}_{\Join}\mid \hat{m}_d^{(t)})$, we perform a first-order Taylor expansion in $m_d^{(t)}$, 
\[
f_2(\hat{m}_d^{(t)}) = f_2(m_d^{(t)}) + f_2'(m_d^{(t)})(\hat{m}_d^{(t)} - m_d^{(t)}) + O(h).
\]

Then we can compute,
\begin{align*}
&\mathbb{E}[\mathrm{Var}(\hat{c}_{\Join}|\hat{m}_d^{(t)})] = \mathbb{E}[f_2(\hat{m}_d^{(t)})] = f_2(m_d^{(t)}) \\
&=\sum_i \mathbb{E}[X_i^2\mid  m_d^{(t)}] +\sum_{i \neq j}\mathbb{E}[X_i X_j\mid  m_d^{(t)}] - c_{\Join}^2\\
&=c_{\Join}  \Phi_1'  + 2\zeta^{(t)}  \Phi_2' + 2\delta^{(t)}  \Phi_3' + 2\eta^{(t)} \Phi_4' - c_{\Join}^2.
\end{align*}
where $\zeta^{(t)}, \delta^{(t)}, \eta^{(t)}$ are the number of pairs of butterflies in $G_s$ which share 0, 1, and 2 edges up to time $t$, respectively. And $\Phi_1', \Phi_2', \Phi_3', \Phi_4'$ are shown in Theorem~\ref{the:variance2}.



Then we calculate $\mathrm{Var}(\mathbb{E}[\hat{c}_{\Join}|\hat{m}_d^{(t)}])$. Let $g(\hat{m}_d^{(t)}) = \mathbb{E}[\hat{c}_{\Join}|\hat{m}_d^{(t)}] = \sum_{\Join \in G} (X_{\Join}\cdot P(\Join))=\frac{\hat{m}_d^{(t)}(\hat{m}_d^{(t)}-1)(\hat{m}_d^{(t)}-2)(\hat{m}_d^{(t)}-3)}{m_d^{(t)}(m_d^{(t)}-1)(m_d^{(t)}-2)(m_d^{(t)}-3)}\cdot c_{\Join}$, we further perform a first-order Taylor expansion in $m_d^{(t)}$, 
\[
g(\hat{m}_d^{(t)}) = g(m_d^{(t)}) + g'(m_d^{(t)})(\hat{m}_d^{(t)} - m_d^{(t)}) + O(h).
\]

So $\mathrm{Var}(g(\hat{m}_d^{(t)})) = (g'(m_d^{(t)}))^2\mathrm{Var}(\hat{m}_d^{(t)})$. According to \cite{DBLP:conf/kdd/Ting14}, the variance of $\hat{m}_d^{(t)}$ is $\mathrm{Var}(\hat{m}_d^{(t)}) \approx \frac{(m_d^{(t)})^2}{1.4426M}$.

Overall, we can obtain the variance,
\begin{align*}
\mathrm{Var}(\hat{c}_{\Join}) &= \mathbb{E}[\mathrm{Var}(\hat{c}_{\Join}|\hat{m}_d^{(t)})] + \mathrm{Var}(\mathbb{E}[\hat{c}_{\Join}|\hat{m}_d^{(t)}])\\
&= c_{\Join}  \Phi_1'  + 2\zeta^{(t)}  \Phi_2' + 2\delta^{(t)}  \Phi_3' + 2\eta^{(t)} \Phi_4' - c_{\Join}^2 \\
&+ (g'(m_d^{(t)}))^2\cdot \frac{(m_d^{(t)})^2}{1.4426M}.
\end{align*}

\end{proof}

\subsubsection{Error Analysis}

Using Chebyshev's inequality, we can bound the probability that the estimate deviates from its expected value by more than a certain multiple of the standard deviation.

\begin{corollary}
For any constant \( \lambda > 0 \), the probability that the estimate \( \hat{c}_{\Join} \) deviates from its expected value by more than \( \lambda \) times the standard deviation is bounded by \( \frac{1}{\lambda^2} \). Formally:
\[
P\left(\left|\hat{c}_{\Join} - \mathbb{E}[\hat{c}_{\Join}]\right| \geq \lambda \times \sqrt{\mathrm{Var}(\hat{c}_{\Join})}\right) \leq \frac{1}{\lambda^2}.
\]
\end{corollary}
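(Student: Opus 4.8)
The plan is to derive this bound as an immediate consequence of Chebyshev's inequality, since the corollary is nothing more than the instantiation of that inequality for the specific random variable $\hat{c}_{\Join}$. Recall the general statement: for any random variable $X$ with finite mean $\mu = \mathbb{E}[X]$ and finite variance $\sigma^2 = \mathrm{Var}(X)$, and for every $k > 0$, one has $P(|X - \mu| \geq k\sigma) \leq 1/k^2$. My strategy is therefore to verify that $\hat{c}_{\Join}$ meets the two hypotheses of this inequality and then substitute $k = \lambda$.

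First I would invoke Theorem~\ref{the:unbiasedness2}, which establishes that $\mathbb{E}[\hat{c}_{\Join}] = c_{\Join}$ is finite, so the mean $\mu$ exists. Next I would invoke Theorem~\ref{the:variance2}, which gives an explicit closed-form expression for $\mathrm{Var}(\hat{c}_{\Join})$; since this expression is a finite sum of finite terms (all denominators $|B_{\neq \emptyset}| - i$, $m_d^{(t)} - i$, and $M$ being strictly positive in the active regime where $|B_{\neq \emptyset}| > 3$ and $m_d^{(t)} > 3$), the quantity $\sigma^2 = \mathrm{Var}(\hat{c}_{\Join})$ is finite and well defined. With both hypotheses verified, I would set $\sigma = \sqrt{\mathrm{Var}(\hat{c}_{\Join})}$ and $k = \lambda$, and read off directly that $P\!\left(|\hat{c}_{\Join} - \mathbb{E}[\hat{c}_{\Join}]| \geq \lambda \sqrt{\mathrm{Var}(\hat{c}_{\Join})}\right) \leq 1/\lambda^2$, which is exactly the claimed inequality.

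Because this is a direct application of a standard concentration inequality, there is no genuinely hard step; the only point requiring any care is confirming the finiteness (and, where relevant, positivity) of the variance so that the normalization by $\sqrt{\mathrm{Var}(\hat{c}_{\Join})}$ is meaningful. In the degenerate case where the variance vanishes, the inequality holds trivially, since $\hat{c}_{\Join}$ would then equal its mean almost surely and the left-hand probability would be zero. Thus the entire argument reduces to citing the two preceding theorems to discharge the hypotheses and then quoting Chebyshev's inequality, with no intermediate computation needed.
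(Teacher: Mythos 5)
Your proposal is correct and follows exactly the same route as the paper, which likewise proves the corollary as a direct application of Chebyshev's inequality; the extra care you take in verifying finiteness of the mean and variance via the preceding theorems is sound but not something the paper spells out.
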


\begin{proof}
This result follows directly from applying the Chebyshev's inequality.
\end{proof}

\subsection{Complexity Analysis}



\begin{theorem}
     Algorithm~\ref{algo:bc2} takes  $O(m^{(t)} + M\cdot \ln{(1+\frac{m_d^{(t)}}{M}}) \cdot  d_{{max}}^2)$ time to process $t$ elements in the input bipartite graph stream, where $M$ is the maximum number of edges in the sampled subgraph, $d_{{max}}$ is the  maximum degree of any vertex in the sampled subgraph and $m^{(t)}/m_d^{(t)}$ is the number of all/distinct edges up to time $t$.
\end{theorem}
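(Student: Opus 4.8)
The plan is to mirror the complexity argument already used for \fable\ in Theorem~1: split the running time into the $O(1)$ per-edge bookkeeping paid by every arriving edge and the $O(d_{max}^2)$ work paid only when the expensive neighbour-enumeration step fires, and then bound the expected number of times that expensive step runs. For the cheap part, I would observe that for each of the $m^{(t)}$ arriving edges (duplicates included), evaluating $\mathbb{H}_{pos}(u,v)$, comparing priorities at line~5, and—when the branch is taken—recomputing $\rho$, $\rho_{max}$, $q$, $\hat{m}_d^{(t)}$, and $\theta$, and overwriting the bucket (lines~4--18) are all constant-time operations. This contributes $O(m^{(t)})$ in total, matching the first term of the claimed bound.

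Next I would isolate the only costly line, the \FButterflies\ call at line~18, which enumerates pairs of neighbours of $u$ and $v$ in the sample and therefore costs $O(d_{max}^2)$ in the worst case. The key structural observation is that this call executes exactly when the guard at line~5 holds, i.e.\ when the incoming edge becomes the new lowest-priority (record) edge of its bucket; hence the number of expensive calls equals the number of bucket-minimum updates over the whole stream. Here I would also note, to reduce to distinct edges, that a duplicate edge carries the same hash value and hence the same priority as its first occurrence, so it can never strictly beat the current bucket content—only first occurrences of the $m_d^{(t)}$ distinct edges can trigger an update.

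The heart of the proof is bounding the expected number of these updates. Treating $\mathbb{H}_{pri}$ as assigning i.i.d.\ $Uniform(0,1)$ priorities and $\mathbb{H}_{pos}$ as distributing edges uniformly over the $M$ buckets, the distinct edges landing in a fixed bucket arrive in random order with i.i.d.\ priorities, and an update occurs precisely at a left-to-right minimum (a ``record''). A bucket receiving $k_b$ distinct edges thus sees $H_{k_b}=\sum_{j=1}^{k_b}\frac{1}{j}\approx\ln k_b$ records in expectation; summing over buckets with $\sum_b k_b=m_d^{(t)}$ and applying the concavity of $H_{(\cdot)}$ (Jensen's inequality) at the average load $m_d^{(t)}/M$ gives a total of $O\!\big(M\ln(1+\tfrac{m_d^{(t)}}{M})\big)$ expected updates. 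Equivalently, I can sum the per-arrival update probability $\frac{M\left(1-(1-1/M)^{i}\right)}{i}$ over $i=1,\dots,m_d^{(t)}$ and approximate it by the same expression via a harmonic-number/integral estimate, in direct analogy with the $\sum_{i=M}^{m_d^{(t)}}\frac{M}{i+1}$ step of Theorem~1; the $1+$ keeps the bound valid in the sparse regime $m_d^{(t)}<M$, where each distinct edge causes at most one update so the count is simply $O(m_d^{(t)})$.

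Multiplying the $O(d_{max}^2)$ per-call cost by the $O\!\big(M\ln(1+\tfrac{m_d^{(t)}}{M})\big)$ expected number of calls and adding the $O(m^{(t)})$ bookkeeping yields the stated $O\!\big(m^{(t)}+M\ln(1+\tfrac{m_d^{(t)}}{M})\cdot d_{max}^2\big)$. The main obstacle is the third step: showing the record-counting bound collapses to $M\ln(1+m_d^{(t)}/M)$ uniformly across both the dense ($m_d^{(t)}\gg M$) and sparse ($m_d^{(t)}<M$) regimes, which requires the harmonic-to-logarithm approximation and the passage from per-bucket loads to the average load via concavity. Everything else is routine constant-time accounting, so I would keep those parts brief and concentrate the exposition on the expected-update estimate.
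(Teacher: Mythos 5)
Your proposal is correct and follows essentially the same route as the paper's proof: both charge $O(m^{(t)})$ for per-edge bookkeeping, identify the $O(d_{\max}^2)$ butterfly update as firing only on bucket-minimum replacements, bound the expected number of replacements per bucket by a harmonic number (your left-to-right-minima/record argument is the same fact the paper states via the $Beta(1, m_{B[i]}^{(t)})$ distribution of the bucket minimum), and reduce to the average load $m_d^{(t)}/M$ to obtain $M\ln(1+m_d^{(t)}/M)$. Your explicit treatment of duplicates (identical hash, hence never a strict improvement) and of the sparse regime via concavity are minor refinements the paper leaves implicit.
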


\begin{proof}
Processing all edges in lines 4-5 requires a total $m^{(t)}$ time.
Each time a new edge replaces any edge in buckets, we need to update the butterfly count, which takes $O(d_{max}^2)$ time in the worst case, and the distinct edge count, which takes $O(1)$ time. 
Due to $\mathbb{H}_{pri} \sim Uniform(0, 1)$, the hash value of each edge in buckets follows $Beta(1, m_{B[i]}^{(t)})$ distribution, where $m_{B[i]}^{(t)}$ is the number of distinct edges assigned to bucket $B[i]$ up to time $t$. The total number of edge replacements is $\sum_{i=1}^M\sum_{j=0}^{m_{B[i]}^{(t)}}\frac{1}{j+1}$. Due to $\mathbb{H}_{pos} \sim DiscreteUniform(1,M)$, we can conclude that the number of edges assigned to each bucket is approximately  $\frac{m_d^{(t)}}{M}$ when $m_d^{(t)} \gg M$. Therefore, $\sum_{i=1}^M\sum_{j=0}^{m_{B[i]}^{(t)}}\frac{1}{j+1} = M\cdot\sum_{j=0}^{\frac{m_d^{(t)}}{M}}\frac{1}{j+1} \approx M\cdot \ln{(1+\frac{m_d^{(t)}}{M})}$ based on the approximation formula for harmonic numbers. The overall time complexity of Algorithm~\ref{algo:bc2} is $O(m^{(t)} + M\cdot \ln{(1+\frac{m_d^{(t)}}{M}}) \cdot  d_{{max}}^2)$.
\end{proof}


\begin{theorem}
    Algorithm~\ref{algo:bc2} has a space complexity of $O(M)$, where $M$ is the maximum number of edges in the sampled subgraph.
\end{theorem}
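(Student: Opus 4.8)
The plan is to identify the single dominant data structure in Algorithm~\ref{algo:bc2} and then verify that every other quantity the algorithm maintains occupies only constant space, so that the total is governed by that one structure. The only structure whose size grows with the parameter $M$ is the array of buckets $B[1], \ldots, B[M]$. Each bucket holds at most one edge $(u,v)$ together with its implicitly associated priority value $\mathbb{H}_{pri}(u,v)$, and an edge together with a hash value is $O(1)$ space. Hence the $M$ buckets collectively use $O(M)$ space, and the first step is simply to record this.

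The second step, and the only point that requires any care, is to confirm that no auxiliary structure secretly reintroduces an $\Omega(M)$ cost. I would argue this explicitly for each remaining variable: the scalars $\hat{m}_d^{(t)}$, $q$, $\hat{c}_{\Join}$, and the running count $|B_{\neq \emptyset}|$ are all $O(1)$; the hash functions $\mathbb{H}_{pri}$ and $\mathbb{H}_{pos}$ are evaluated on demand and occupy $O(1)$ space. The subtle case is the distinct-edge estimation inspired by the FM sketch, since a textbook FM/HyperLogLog sketch keeps a length-$M$ vector $b = (b[1], \ldots, b[M])$. Here the key observation is that \deabcpro never materializes such a vector: the register value $\rho_{max} = -\lfloor \log \mathbb{H}_{pri}(B[pos]) \rfloor$ is recomputed on the fly from the priority of the edge \emph{already stored} in bucket $B[pos]$, and the cardinality estimate is advanced purely through the scalar accumulator $q = \frac{1}{M}\sum_{i=1}^{M} 2^{-\rho}$ and the scalar $\hat{m}_d^{(t)}$. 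Thus the sketch information is piggybacked onto the buckets rather than stored separately, and it contributes no additional asymptotic space.

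Combining the two steps, the $O(M)$ bucket array dominates and everything else is $O(1)$, so the overall space complexity is $O(M)$. I would close by contrasting this with \fable, whose priority queue $\mathbb{Q}$ forces a second $O(M)$ structure beyond the sample itself; although both algorithms are asymptotically $O(M)$, \deabcpro attains the bound with a single array and no queue, which is the practical memory saving the paper emphasizes. I do not anticipate a genuine obstacle in this proof; the one thing worth stating carefully, rather than glossing over, is precisely why the FM-style cardinality estimator does not cost an extra $O(M)$ vector, since a reader familiar with HyperLogLog would expect it to.
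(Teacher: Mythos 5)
Your proof is correct and takes essentially the same approach as the paper, whose own argument is just the one-line observation that the $M$ buckets each store at most one edge. Your additional care in checking that the FM-style cardinality estimator contributes no separate $\Omega(M)$ vector (because $\rho_{max}$ is recomputed from the stored edge's priority and only the scalars $q$ and $\hat{m}_d^{(t)}$ are kept) is a point the paper's proof glosses over, but it does not change the route.
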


\begin{proof}
In Algorithm~\ref{algo:bc2}, \deabcpro maintains $M$ buckets, where each bucket \( B[1, \dots, M] \) stores at most one edge. Therefore, the space complexity of Algorithm~\ref{algo:bc2} is $O(M)$.
\end{proof}

\stitle{\underline{Remark}} The variance difference between our \deabcpro and \fable primarily stems from the bias in estimating the number of distinct edges $\hat{m}_d$. By accounting for the bias in  $\hat{m}_d$, the variance of \fable can be adjusted to: $\mathrm{Var}(\hat{c}_{\Join}) = c_{\Join} \Phi_4^{(t)} + 2\zeta^{(t)} \Phi_8^{(t)} + 2\delta^{(t)} \Phi_7^{(t)} + 2\eta^{(t)} \Phi_6^{(t)} + (g'(m_d^{(t)}))^2\cdot \frac{(m_d^{(t)})^2}{M-2}$, where $g(\hat{m}_d^{(t)})=\frac{\hat{m}_d^{(t)}(\hat{m}_d^{(t)}-1)(\hat{m}_d^{(t)}-2)(\hat{m}_d^{(t)}-3)}{m_d^{(t)}(m_d^{(t)}-1)(m_d^{(t)}-2)(m_d^{(t)}-3)}\cdot c_{\Join}$.
Compared to the \kmv sketch ($\mathrm{Var}(\hat{m}_d) = \frac{(m_d)^2}{M-2}$) that \fable uses, our \deabcpro achieves lower variance ($\mathrm{Var}(\hat{m}_d) = \frac{(m_d)^2}{1.4426M}$).
Analyzing the time complexity of \deabc and \deabcpro reveals that the logarithmic terms in \fable $(\ln{\frac{m_d^{(t)}+1}{M}})$ and our \deabcpro $(\ln{(1+\frac{m_d^{(t)}}{M^2}}))$ are nearly equivalent due to $ m_d(t) \gg M$. However, \deabcpro exhibits linear dependence on $M$, while \fable’s time complexity grows faster due to the additional $\log M$ factor. As a result, \deabcpro demonstrates superior time efficiency when processing large-scale data, particularly for large $M$.
In terms of space complexity, both \fable and \deabcpro require $O(M)$ space. However, \fable demands additional memory to maintain a priority queue alongside the sampled edges, effectively doubling its memory usage compared to \deabcpro.
In summary, compared to \fable, our \deabcpro offers better time efficiency, reduced memory usage, and lower variance guarantees, enhancing both performance and reliability when processing large-scale data.

Several prior works, such as \mascot~\cite{lim2015mascot}, \triest~\cite{stefani2017triest} and \abacus~\cite{DBLP:conf/icde/PapadiasKPQM24} update the count when an edge arrives rather than when an edge is sampled as in our method. 
\reviseminor{Such a strategy typically leads to more accurate estimations, but it is not directly applicable to bipartite streams with duplicate edges. When an edge is not sampled, it is difficult to reliably determine whether it has already appeared.}
This uncertainty may cause the same butterfly to be counted multiple times, thereby making unbiased estimation challenging in practice.

\section{Experimental Evaluation}
\label{sec:ee}

\begin{table}[h]
    \caption{Datasets}
    \label{tab:syn_data}
    \resizebox{0.48\textwidth}{!}{
    \begin{tabular}{l|c|c|c|c}
    \hline
    Datasets & $|L|$ & $|R|$ & $|E|$ & Num. of Butterflies \\ \hline

    Twitter-ut           &   175,215     &   530,419   &   4,664,605    &  206,474,165     \\ \hline
    Edit-frwiki           &   94,307     &   62,690   &   5,733,289    &   34,161,632,174     \\ \hline
    AmazonRatings           &   2,146,058     &   1,230,916   &   5,838,041    &  35,679,455     \\ \hline
    Movie-lens            & 69,878       &  10,677   &  10,000,054    &   1,197,015,057,161      \\ \hline
    Edit-itwiki            &  95,210      &   623,245   &  13,210,412   &  171,351,772,803   \\ \hline
    Lastfm-band           &   993     &   174,078   &   19,150,868    &  4,419,565,780     \\ \hline
    Discogs           &   1,617,944     &   384   &   24,085,580    &  77,383,418,076     \\ \hline
    Yahoo-songs            &  130,560     &   136,738   &   49,770,695   &   2,323,845,186,838     \\ \hline
    StackoverFlow        &  6,024,261      &   6,024,271   &  63,497,050   &  26,740,122,639    \\ \hline
    LiveJournal            &  3,201,203      &  7,489,073    &   112,307,385   &  2,483,262,136,916    \\ \hline

    Delicious-ui            &  833,082     &   33,778,222   &  301,186,579   &  56,892,252,403    \\ \hline
    Orkut            &   2,783,196     &   8,730,857   &   327,037,487   &   21,996,969,752,862   \\ \hline

    \end{tabular}
    }
\end{table}

\begin{figure*}[]
    \centering
    \vspace{-0.8em}
    \includegraphics[width=0.52\textwidth]{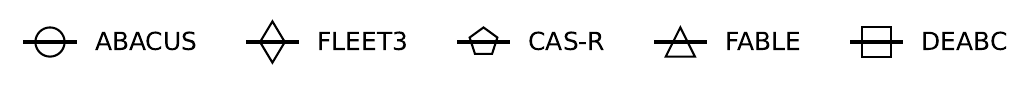}

    \begin{minipage}{0.22\textwidth}
        \centering
        \includegraphics[width=\textwidth]{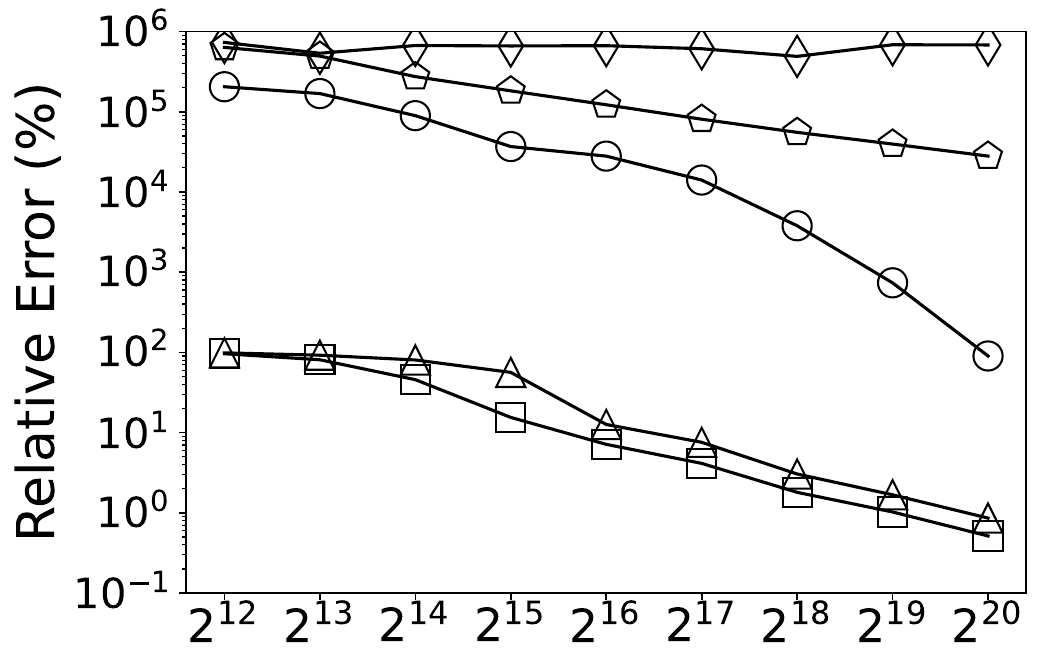}
        \subcaption{Twitter-ut (edges)}
    \end{minipage}
    \hspace{0.015\linewidth}
    \begin{minipage}{0.22\textwidth}
        \centering
        \includegraphics[width=\textwidth]{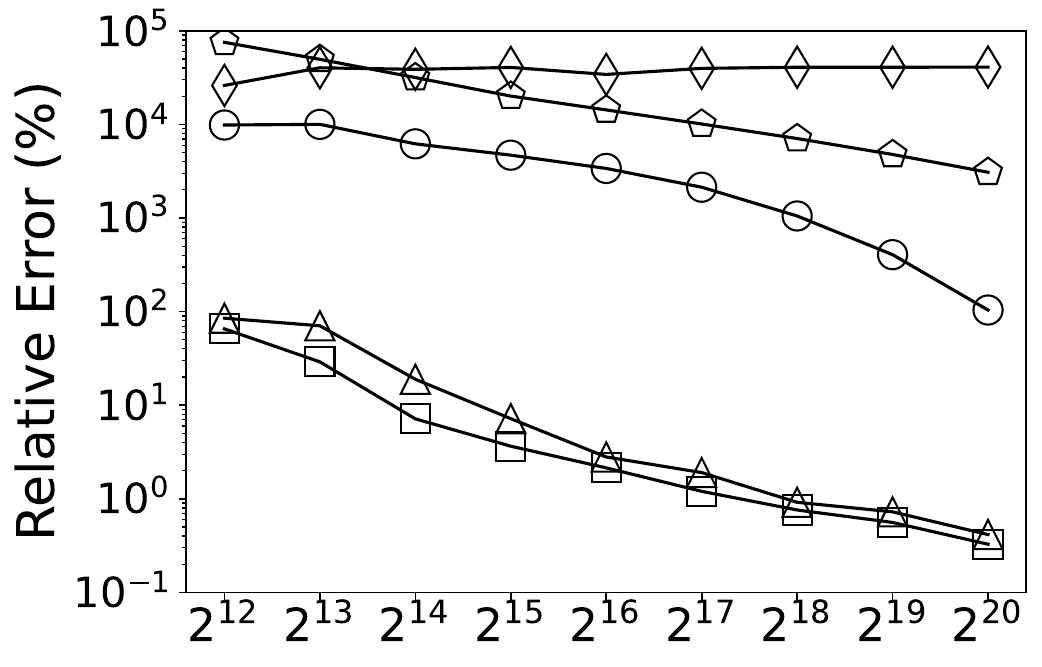}
        \subcaption{Edit-frwiki (edges)}
    \end{minipage}
    \hspace{0.015\linewidth}
    \begin{minipage}{0.22\textwidth}
        \centering
        \includegraphics[width=\textwidth]{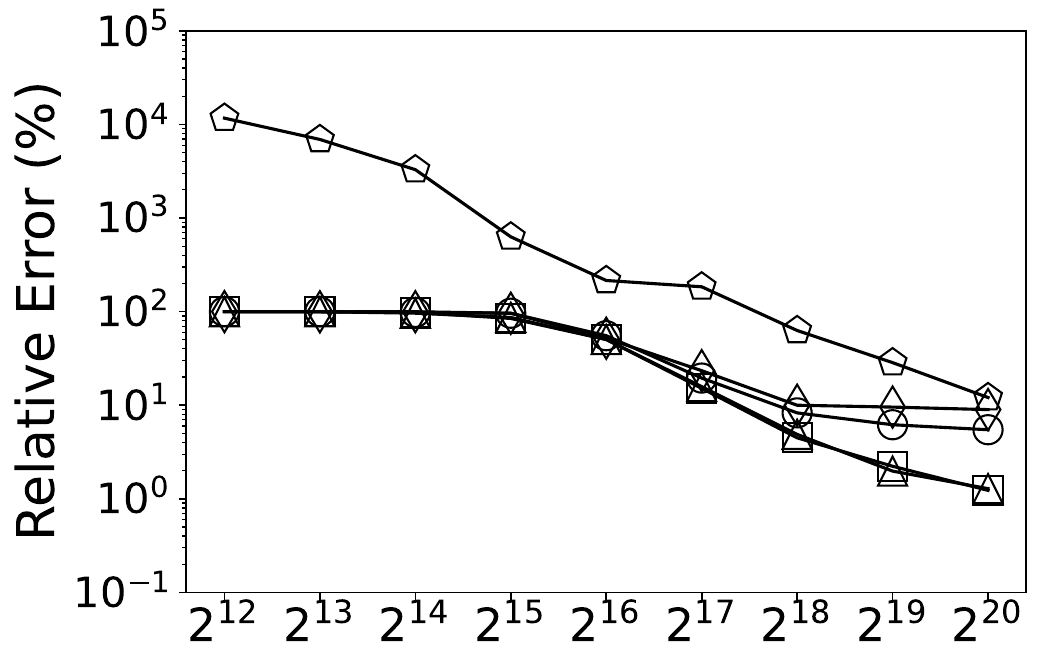}
        \subcaption{AmazonRatings (edges)}
    \end{minipage}
    \hspace{0.015\linewidth}
    \begin{minipage}{0.22\textwidth}
        \centering
        \includegraphics[width=\textwidth]{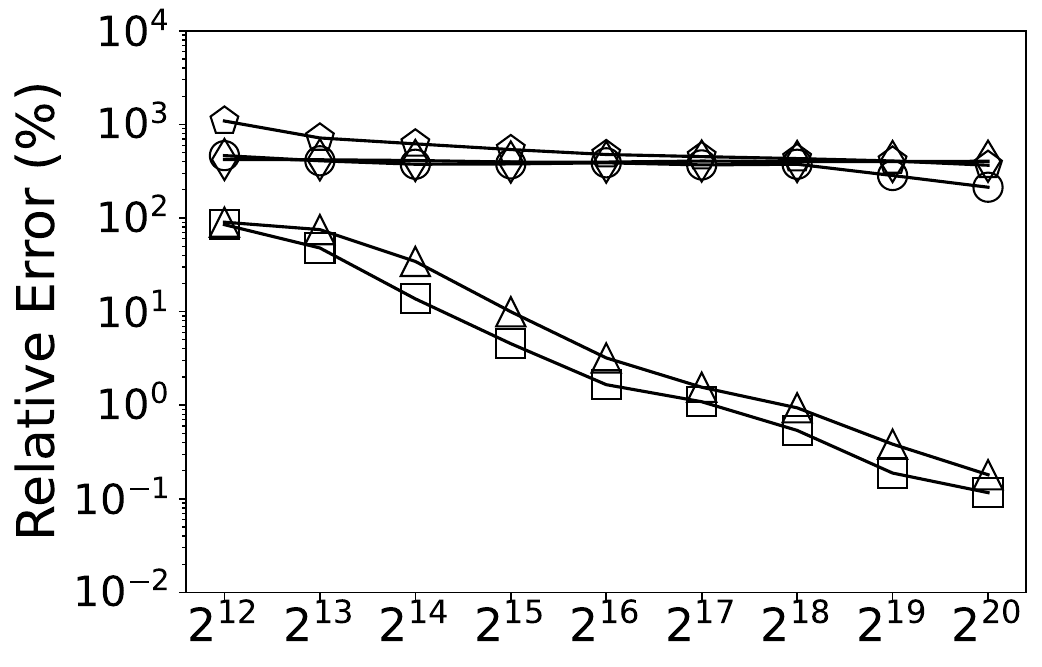}
        \subcaption{Movie-lens (edges)}
    \end{minipage}
    
    \begin{minipage}{0.22\textwidth}
        \centering
        \includegraphics[width=\textwidth]{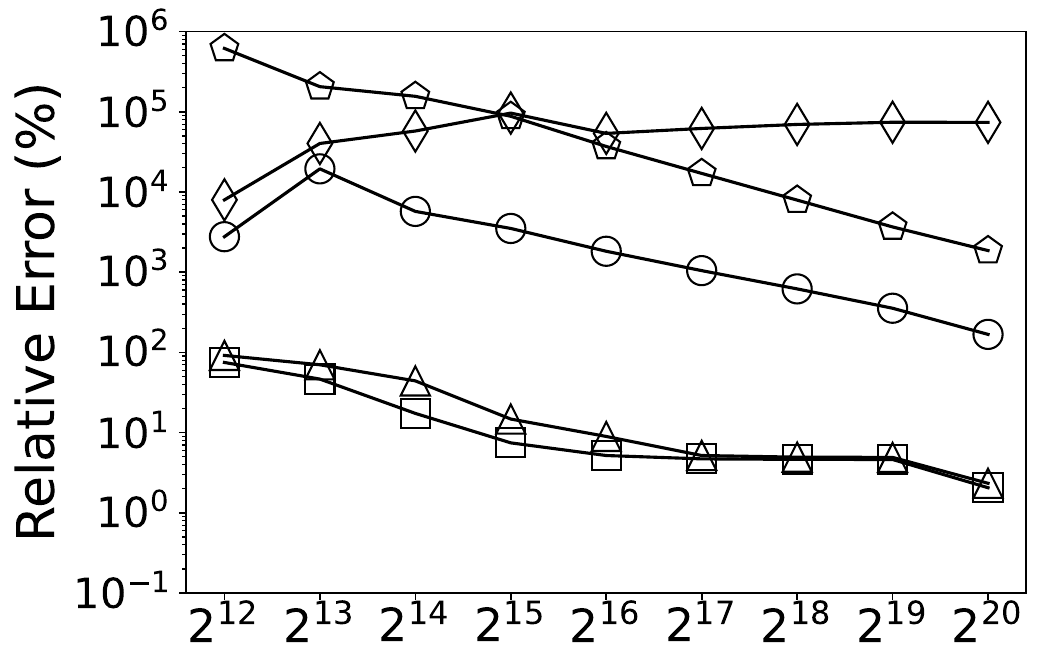}
        \subcaption{Edit-itwiki (edges)}
    \end{minipage}
    \hspace{0.015\linewidth}
    \begin{minipage}{0.22\textwidth}
        \centering
        \includegraphics[width=\textwidth]{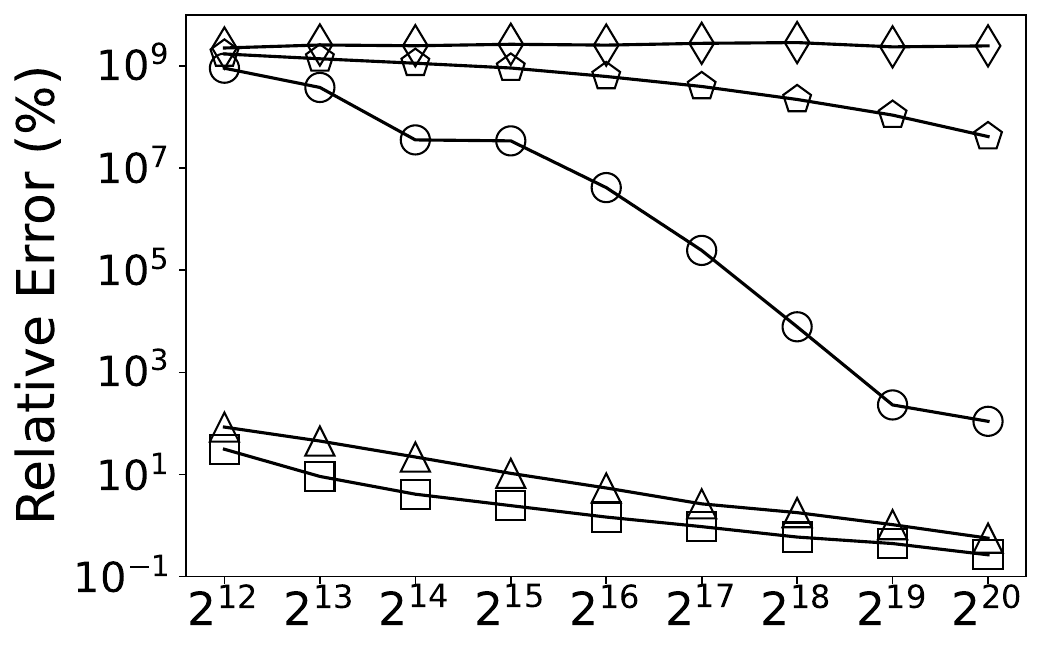}
        \subcaption{Lastfm-band (edges)}
    \end{minipage}
    \hspace{0.015\linewidth}
    \begin{minipage}{0.22\textwidth}
        \centering
        \includegraphics[width=\textwidth]{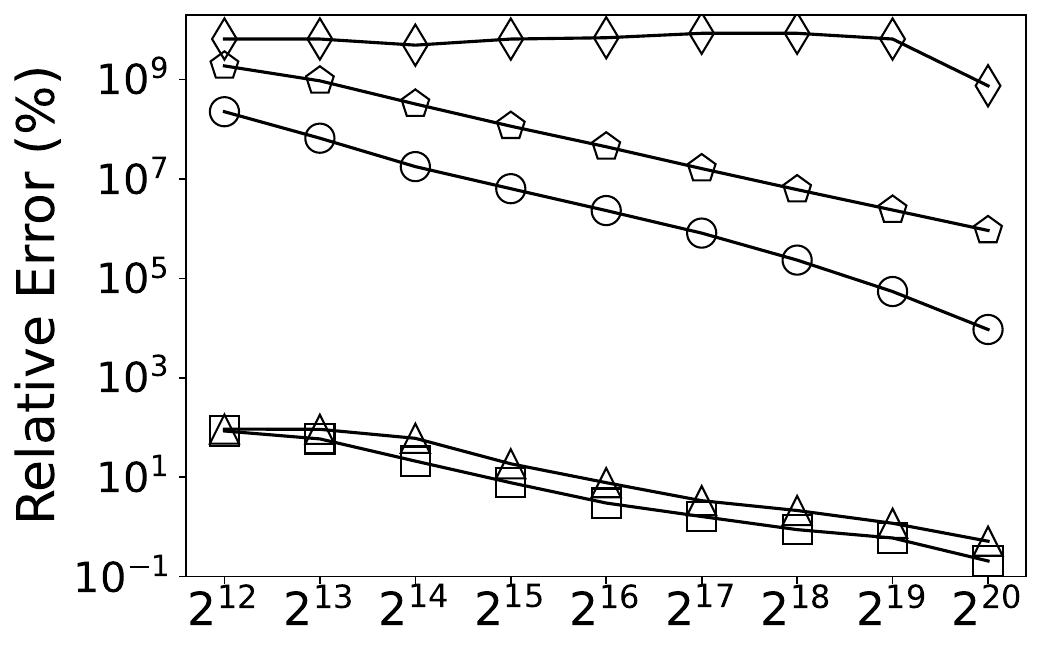}
        \subcaption{Discogs (edges)}
    \end{minipage}
    \hspace{0.015\linewidth}
    \begin{minipage}{0.22\textwidth}
        \centering
        \includegraphics[width=\textwidth]{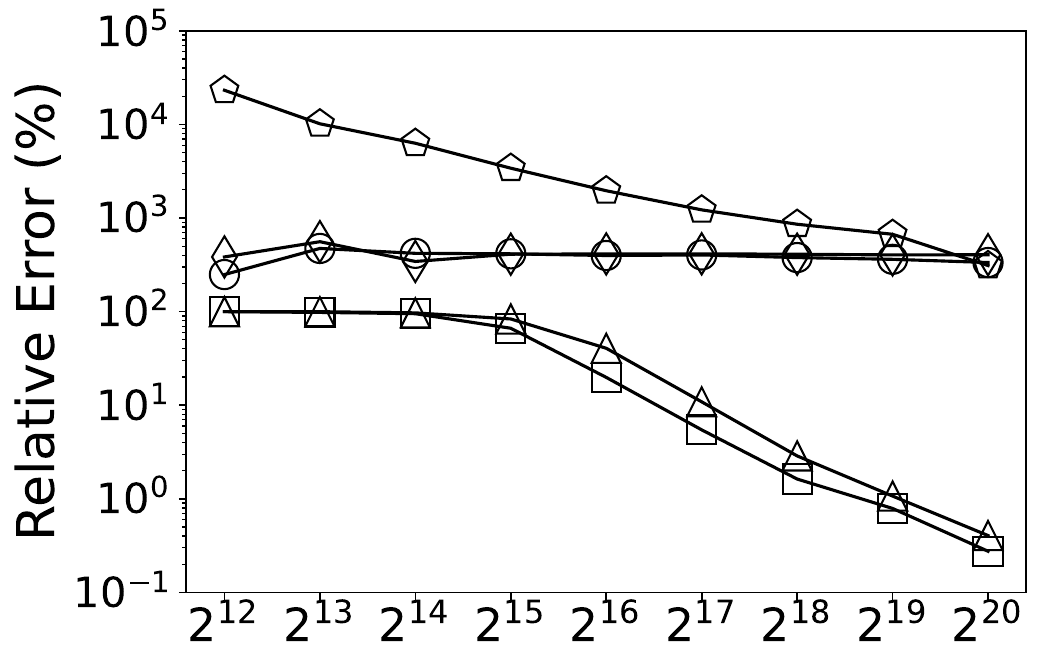}
        \subcaption{Yahoo-songs (edges)}
    \end{minipage}
    
    \begin{minipage}{0.22\textwidth}
        \centering
        \includegraphics[width=\textwidth]{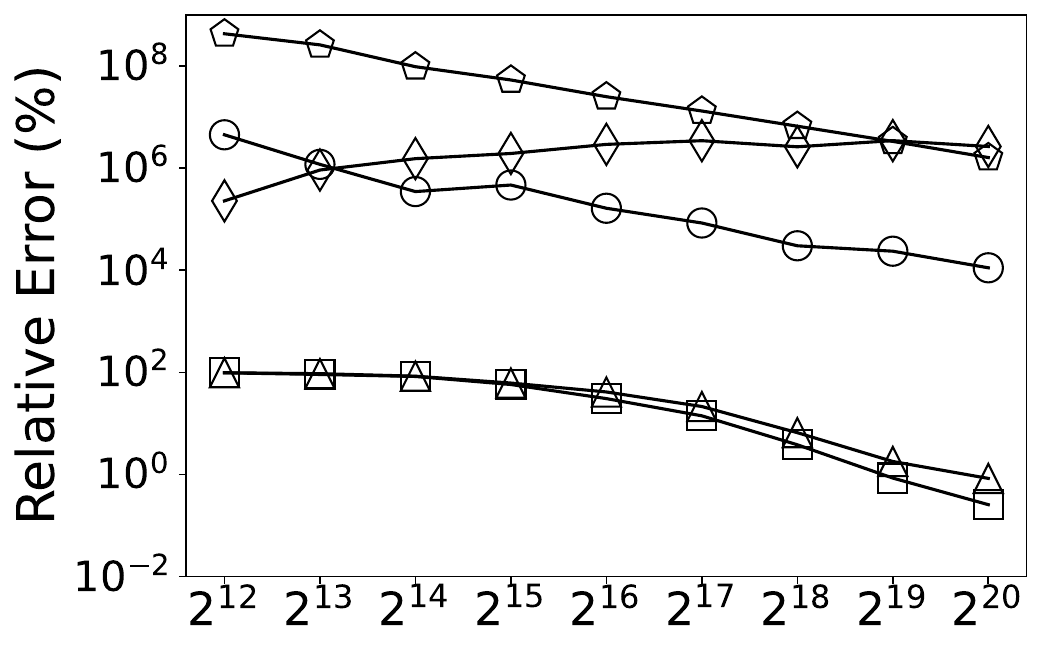}
        \subcaption{StackoverFlow (edges)}
    \end{minipage}
    \hspace{0.015\linewidth}
    \begin{minipage}{0.22\textwidth}
        \centering
        \includegraphics[width=\textwidth]{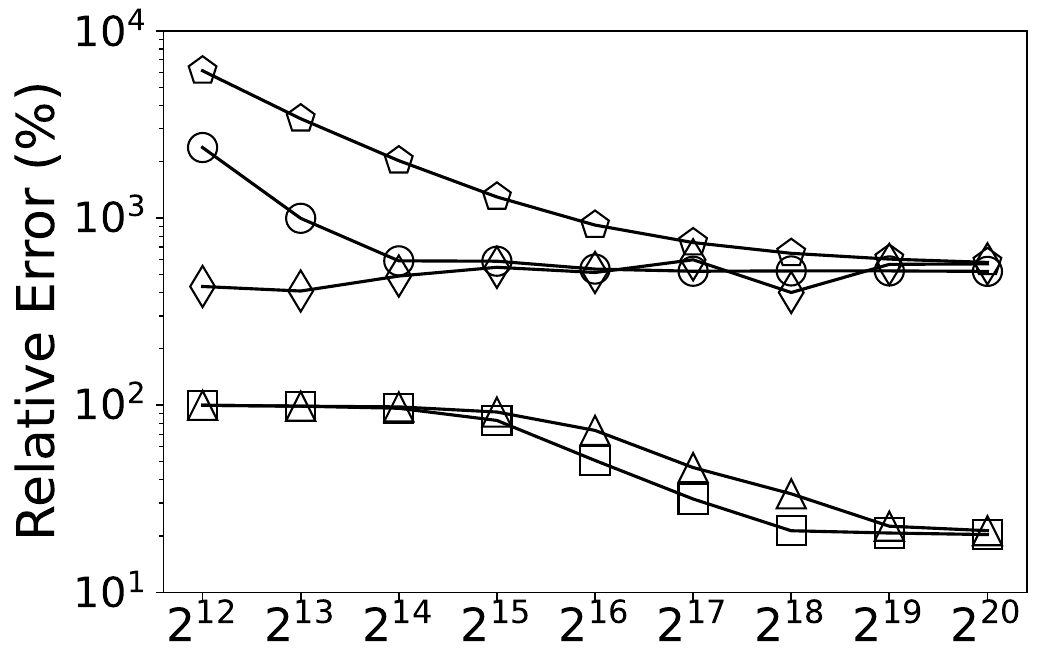}
        \subcaption{LiveJournal (edges)}
    \end{minipage}
    \hspace{0.015\linewidth}
    \begin{minipage}{0.22\textwidth}
        \centering
        \includegraphics[width=\textwidth]{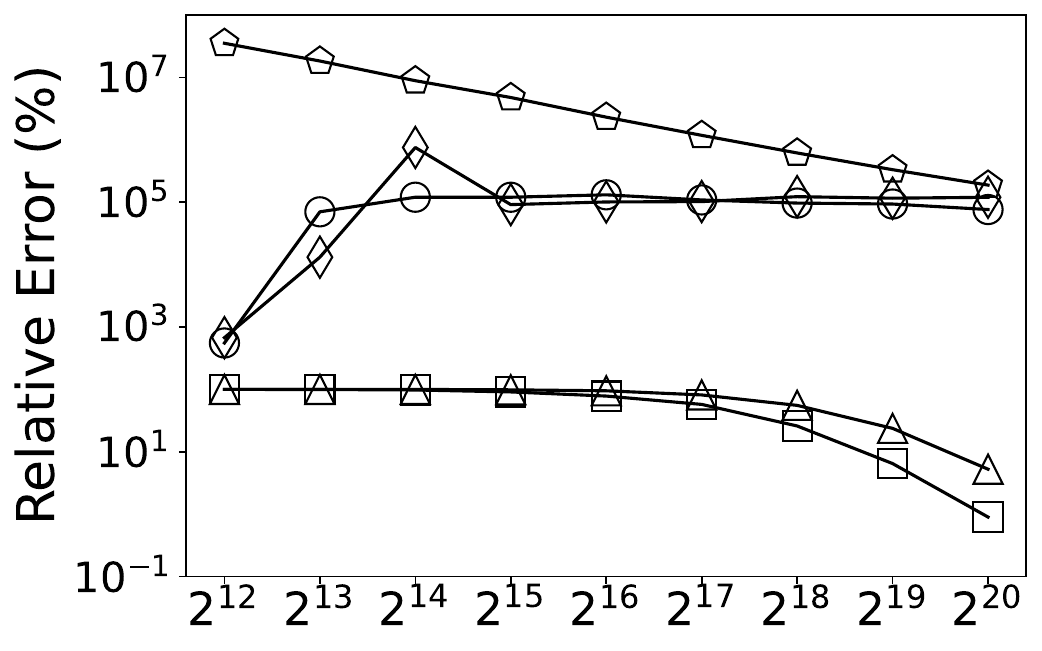}
        \subcaption{Delicious-ui (edges)}
    \end{minipage}
    \hspace{0.015\linewidth}
    \begin{minipage}{0.22\textwidth}
        \centering
        \includegraphics[width=\textwidth]{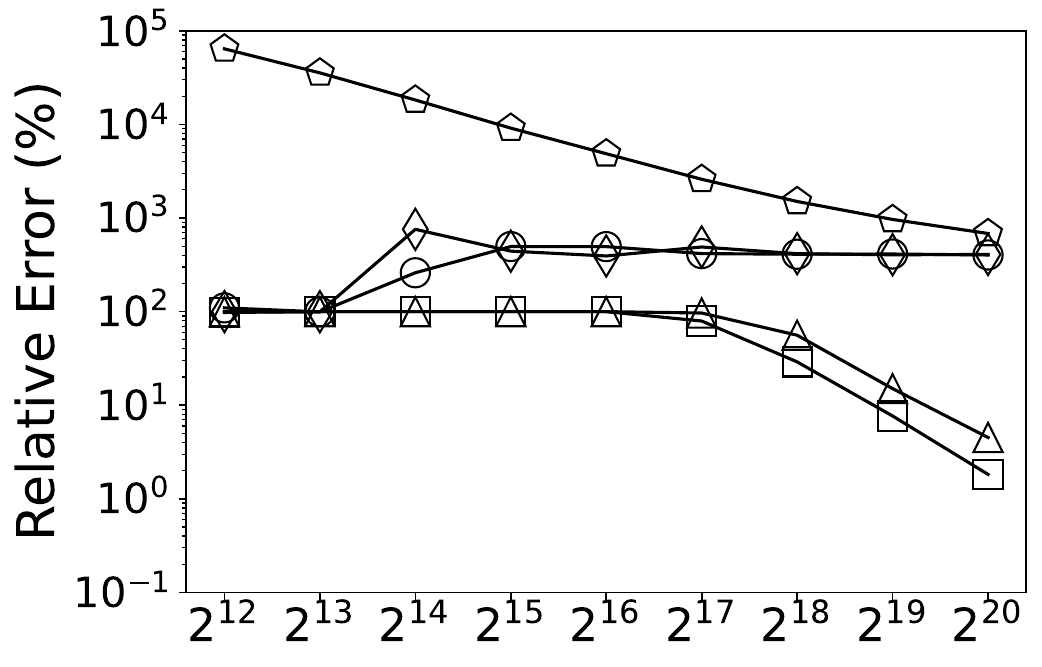}
        \subcaption{Orkut (edges)}
    \end{minipage}

    
    \caption{Relative Error under Different Sampling Sizes}
    \vspace{-0.3em}
    \label{fig:exp_accuracy}
\end{figure*}
\begin{figure*}[]
    \centering
     \vspace{-0.4em}

    \includegraphics[width=0.54\textwidth]{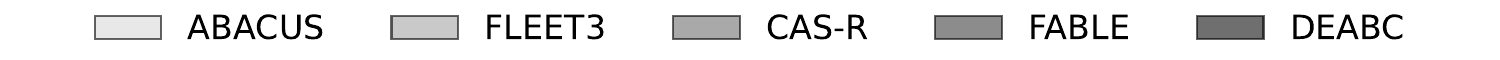}

    
    \begin{minipage}{0.48\textwidth}
        \centering
        \includegraphics[width=\textwidth]{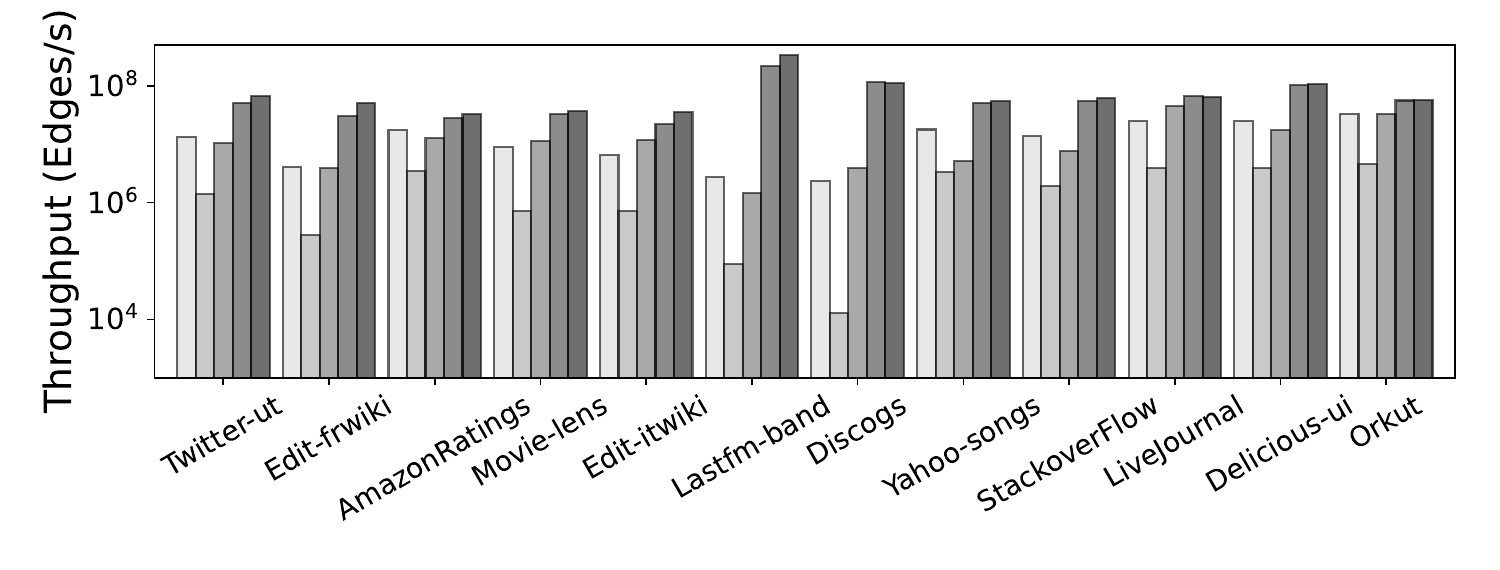}
        \vspace{-2em}
        \subcaption{Sample Size = 14}
    \end{minipage}
    \hspace{0.015\linewidth}
    \begin{minipage}{0.48\textwidth}
        \centering
        \includegraphics[width=\textwidth]{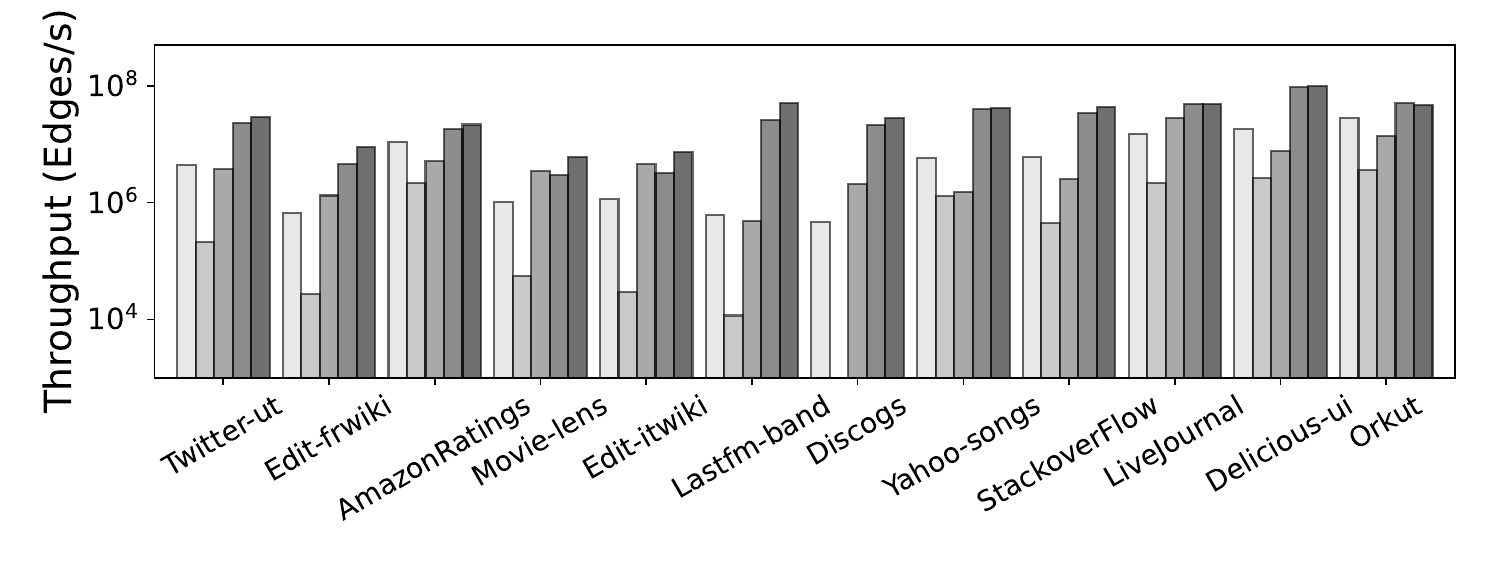}
        \vspace{-2em}
        \subcaption{Sample Size = 16}
    \end{minipage}

    \begin{minipage}{0.48\textwidth}
        \centering
        \includegraphics[width=\textwidth]{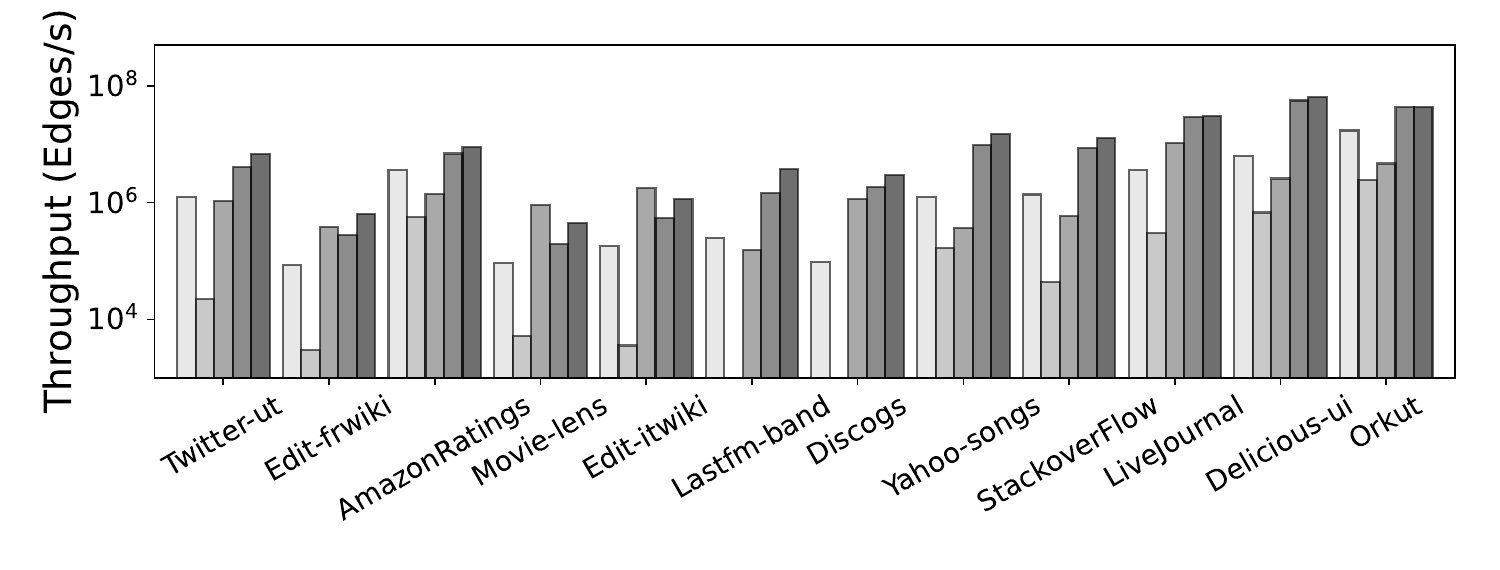}
        \vspace{-2em}
        \subcaption{Sample Size = 18}
    \end{minipage}
    \hspace{0.015\linewidth}
    \begin{minipage}{0.48\textwidth}
        \centering
        \includegraphics[width=\textwidth]{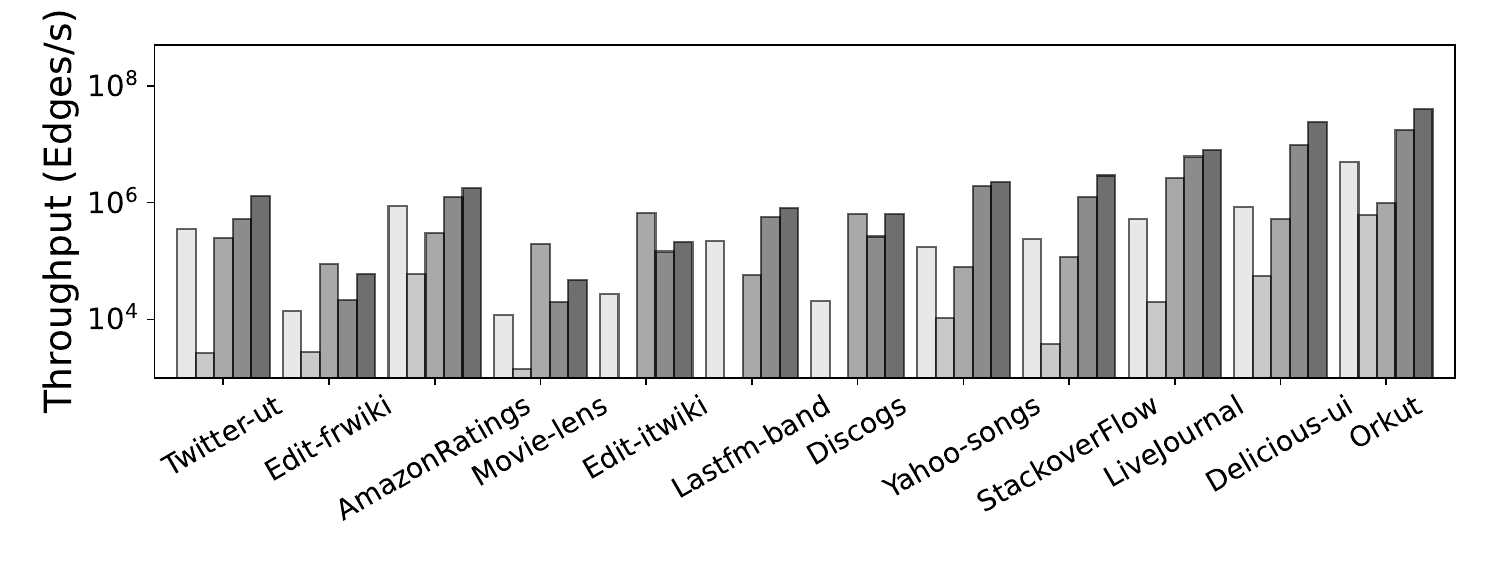}
        \vspace{-2em}
        \subcaption{Sample Size = 20}
    \end{minipage}
    
    \caption{Throughput over All Datasets}
    \label{fig:throughput_dataset}
    \vspace{-0.9em}
\end{figure*}


This section presents our experimental results. All algorithms are implemented in C++ and compiled using GNU GCC 4.8.5 with the \texttt{-O3} optimization level, running on an Intel(R) Xeon(R) Platinum 8373C CPU @ 2.60GHz with 16GB of main memory. The time cost is measured by the elapsed wall-clock time during execution.
\revise{The source code is publicly available at \url{https://github.com/Lingkai981/deabc}}

\stitle{Datasets.} We use 12 public datasets, as shown in the Table \ref{tab:syn_data}. \stackoverflow is from SNAP~\footnote{\url{http://snap.stanford.edu/data/index.html}}, \editfrwiki and \edititwiki are from Network Repository (NR)~\footnote{\url{http://networkrepository.com/}}, and the rest are from KNOECT~\footnote{\url{http://konect.uni-koblenz.de/}}. 
Among these, \twitterut, \editfrwiki, \amazonratings, \edititwiki, \lastfmband, \discogs, \stackoverflow, and \deliciousui are real-world streaming bipartite graphs with duplicate edges, which are suitable for evaluating the performance of algorithms in streaming scenarios.
To evaluate the impact of edge duplication, we used four bipartite graphs without duplicates, \movielens, \yahoosongs, \livejournal, and \orkut. For each datasets, edge repetitions follow a $Geometric\left(\frac{1}{1+\lambda}\right)$ distribution with duplication ratio $\lambda$, and the stream is randomly shuffled. The default duplication rate is 50\%.

\begin{figure}
    \centering
    \includegraphics[width=0.5\textwidth]{figure/accuracy/de_accuracy_legend_only.pdf}

    
    
    
    \begin{minipage}{0.22\textwidth}
        \centering
        \includegraphics[width=\textwidth]{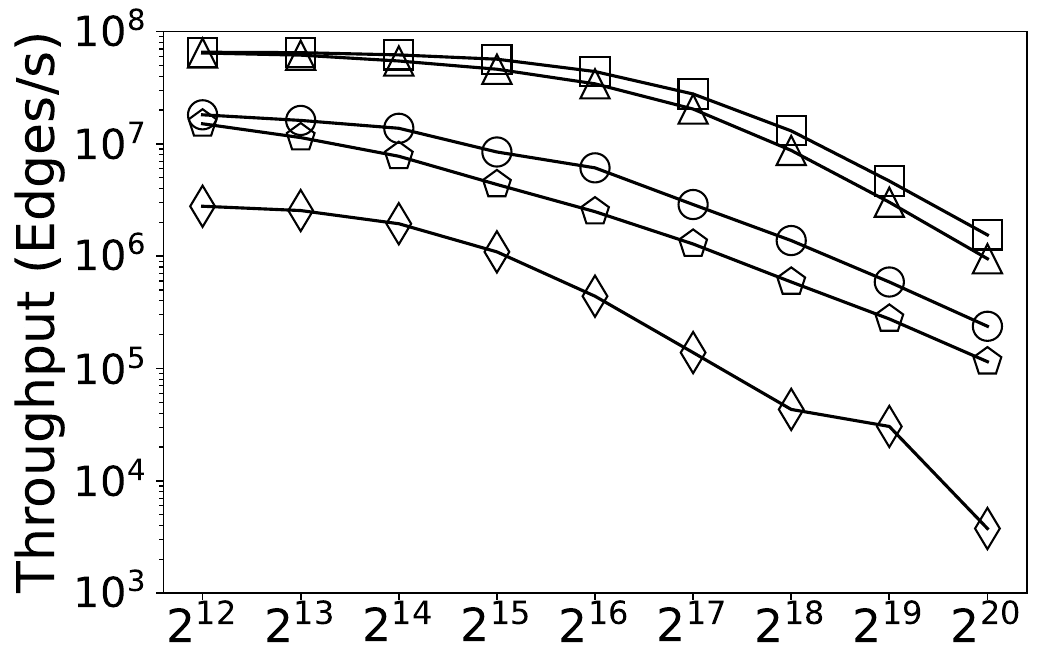}
        \subcaption{StackoverFlow (edges)}
    \end{minipage}
    \hspace{0.015\linewidth}
    \begin{minipage}{0.22\textwidth}
        \centering
        \includegraphics[width=\textwidth]{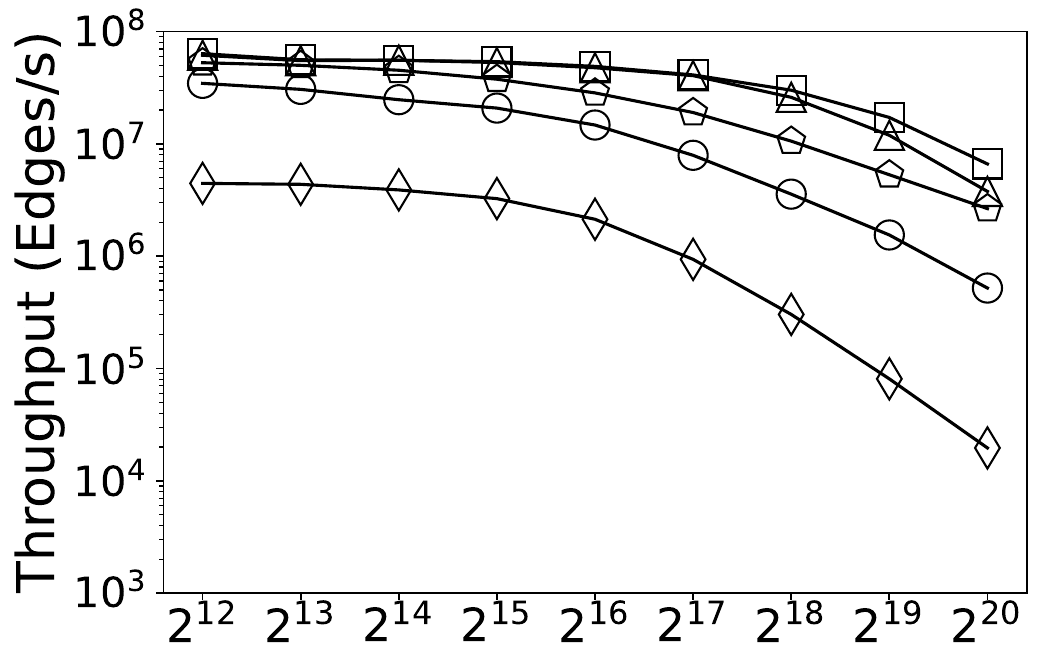}
        \subcaption{LiveJournal (edges)}
    \end{minipage}
    
    \begin{minipage}{0.22\textwidth}
        \centering
        \includegraphics[width=\textwidth]{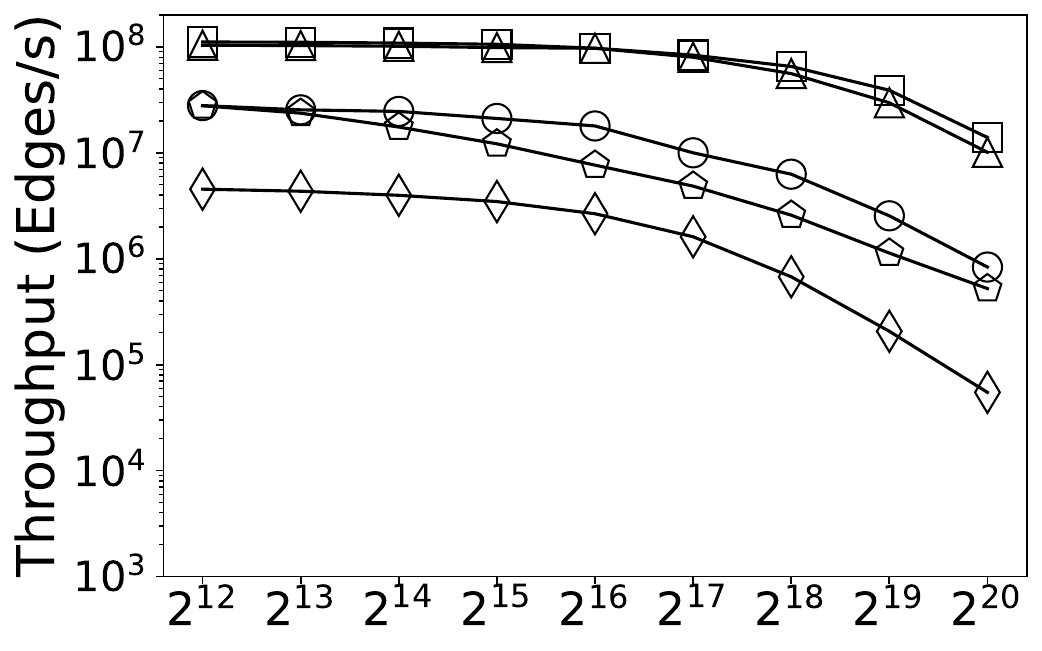}
        \subcaption{Delicious-ui (edges)}
    \end{minipage}
    \hspace{0.015\linewidth}
    \begin{minipage}{0.22\textwidth}
        \centering
        \includegraphics[width=\textwidth]{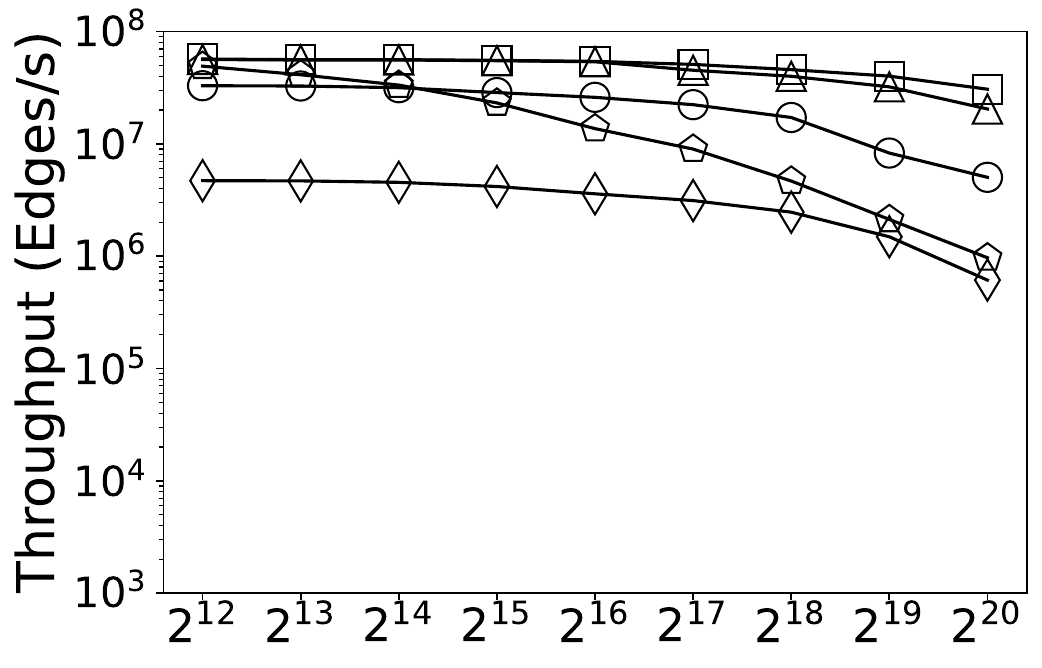}
        \subcaption{Orkut (edges)}
    \end{minipage}

    \caption{Throughput under Different Sampling Sizes}
    \vspace{-0.8em}
    \label{fig:throughput_size}
\end{figure}
\begin{figure*}
    \centering
    \vspace{-0.96em}

    \includegraphics[width=0.5\textwidth]
    {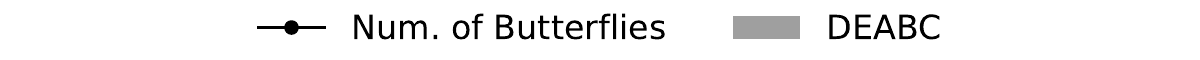}

    
    \begin{minipage}{0.96\textwidth}
        \centering
        \includegraphics[width=\textwidth]{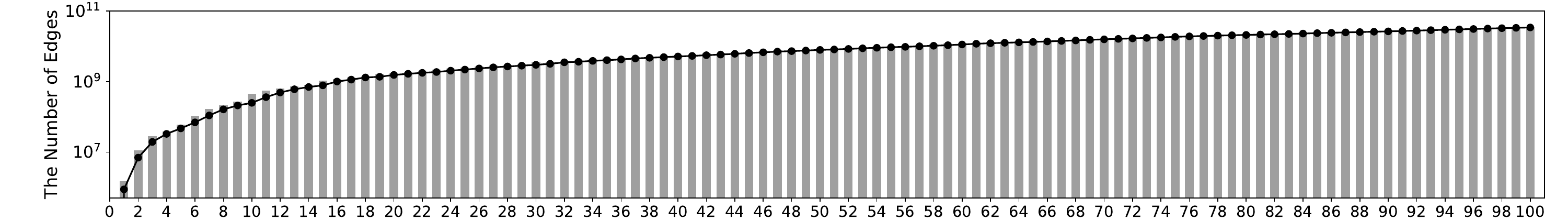}
        \subcaption{Edit-frwiki (time snapshots)}
    \end{minipage}
    
    \begin{minipage}{0.96\textwidth}
        \centering
        \includegraphics[width=\textwidth]{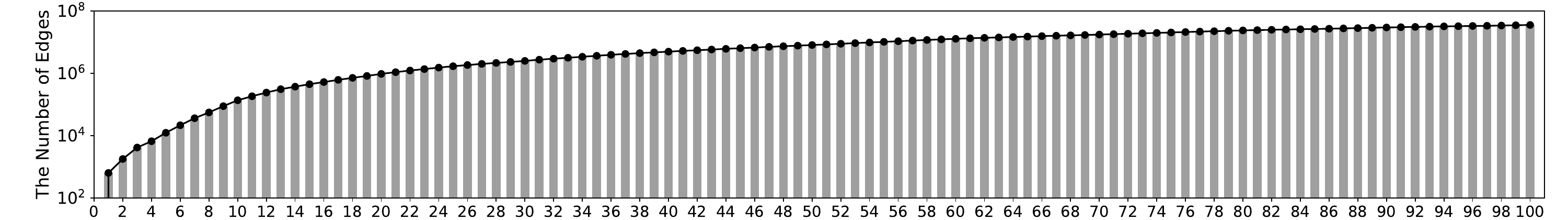}
        \subcaption{AmazonRatings (time snapshots)}
    \end{minipage}
    
    
    
    
    
    
    \begin{minipage}{0.96\textwidth}
        \centering
        \includegraphics[width=\textwidth]{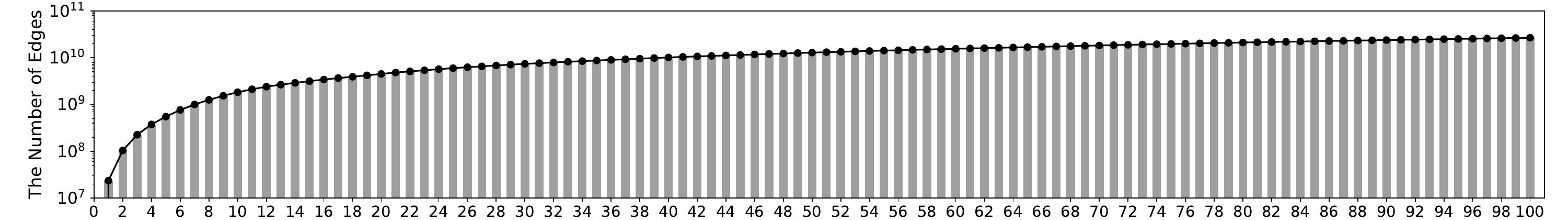}
        \subcaption{StackoverFlow (time snapshots)}
    \end{minipage}
    
    
    


    \caption{The Estimated Number of Butterflies over Time}
    \label{fig:bc_over_time}
    \vspace{-0.96em}
\end{figure*}

\begin{figure*}[]
    \centering
    

    \includegraphics[width=0.56\textwidth]{figure/Throughput/plot_five_bar_charts_legend.pdf}
    
    \begin{minipage}{0.48\textwidth}
        \centering
        \includegraphics[width=\textwidth]{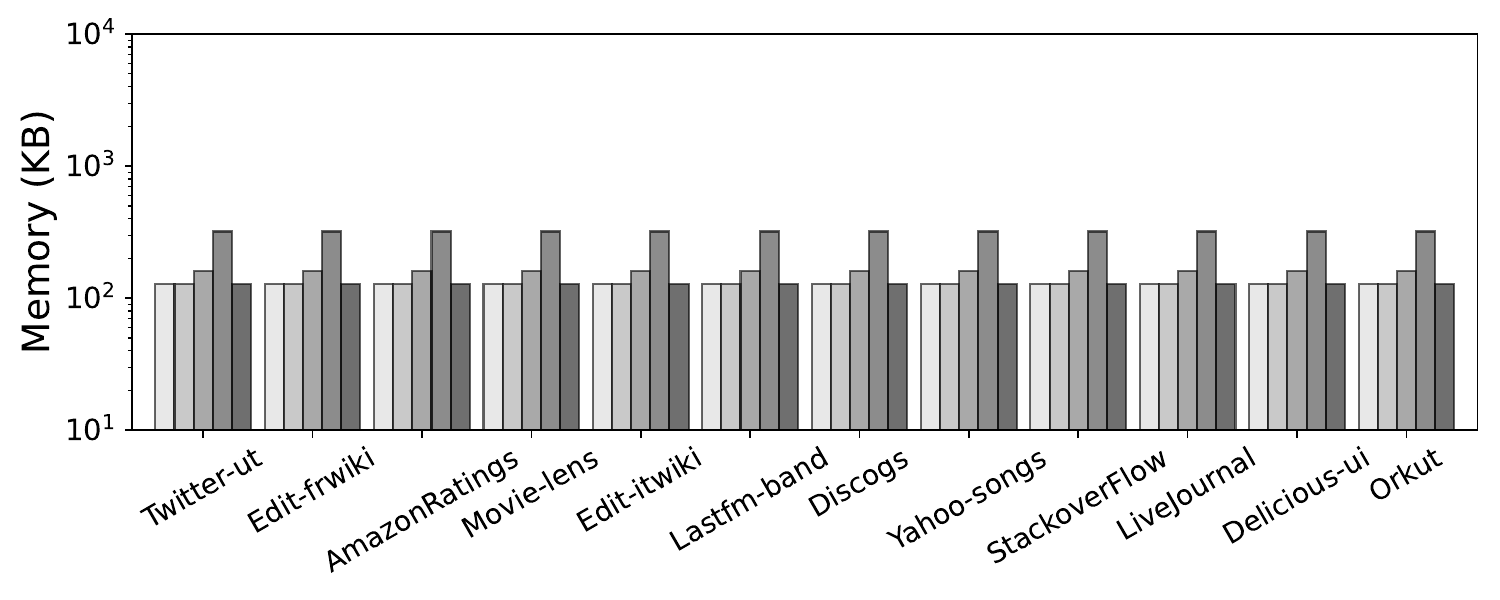}
        \vspace{-2em}
        \subcaption{Sample Size = 14}
    \end{minipage}
    \hspace{0.015\linewidth}
    \begin{minipage}{0.48\textwidth}
        \centering
        \includegraphics[width=\textwidth]{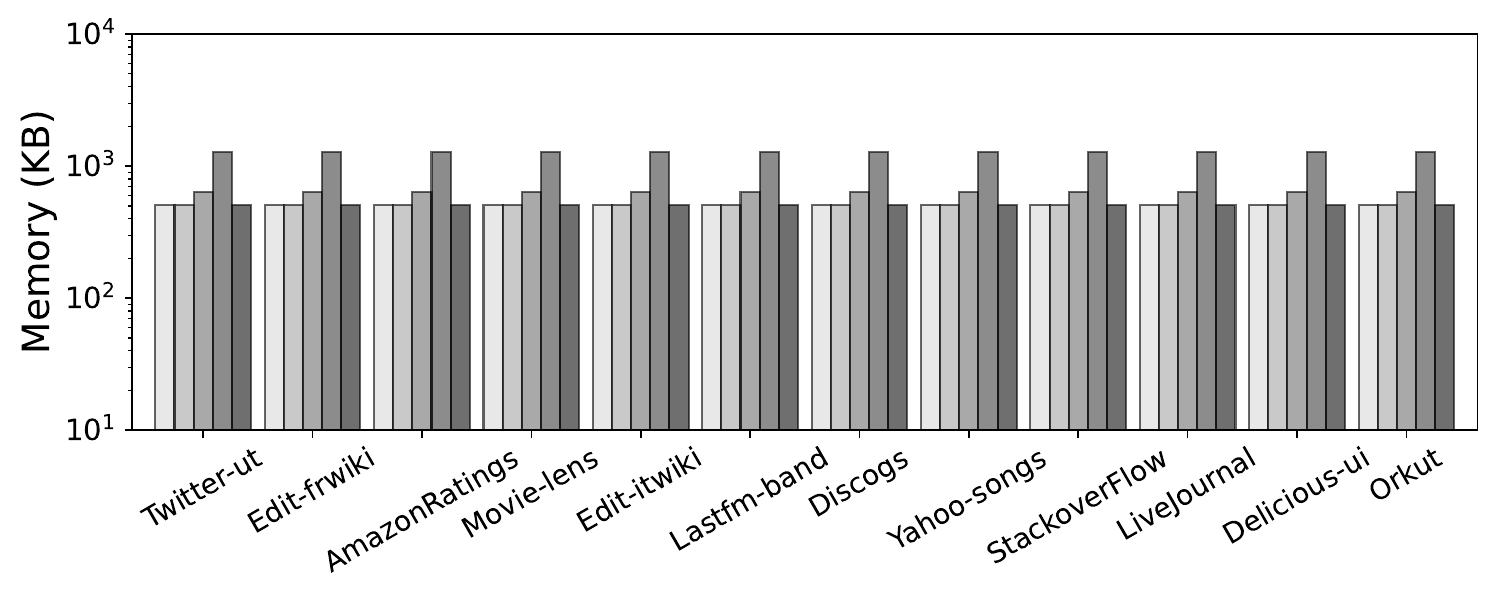}
        \vspace{-2em}
        \subcaption{Sample Size = 16}
    \end{minipage}

    \begin{minipage}{0.48\textwidth}
        \centering
        \includegraphics[width=\textwidth]{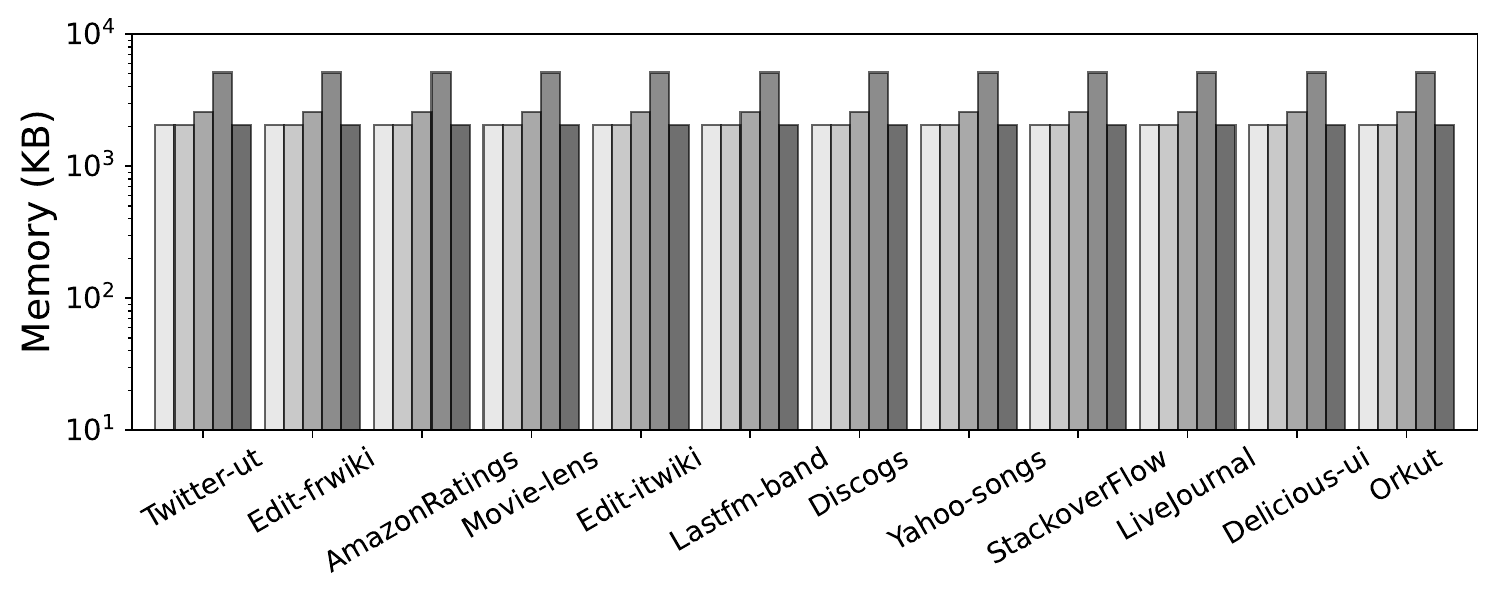}
        \vspace{-2em}
        \subcaption{Sample Size = 18}
    \end{minipage}
    \hspace{0.015\linewidth}
    \begin{minipage}{0.48\textwidth}
        \centering
        \includegraphics[width=\textwidth]{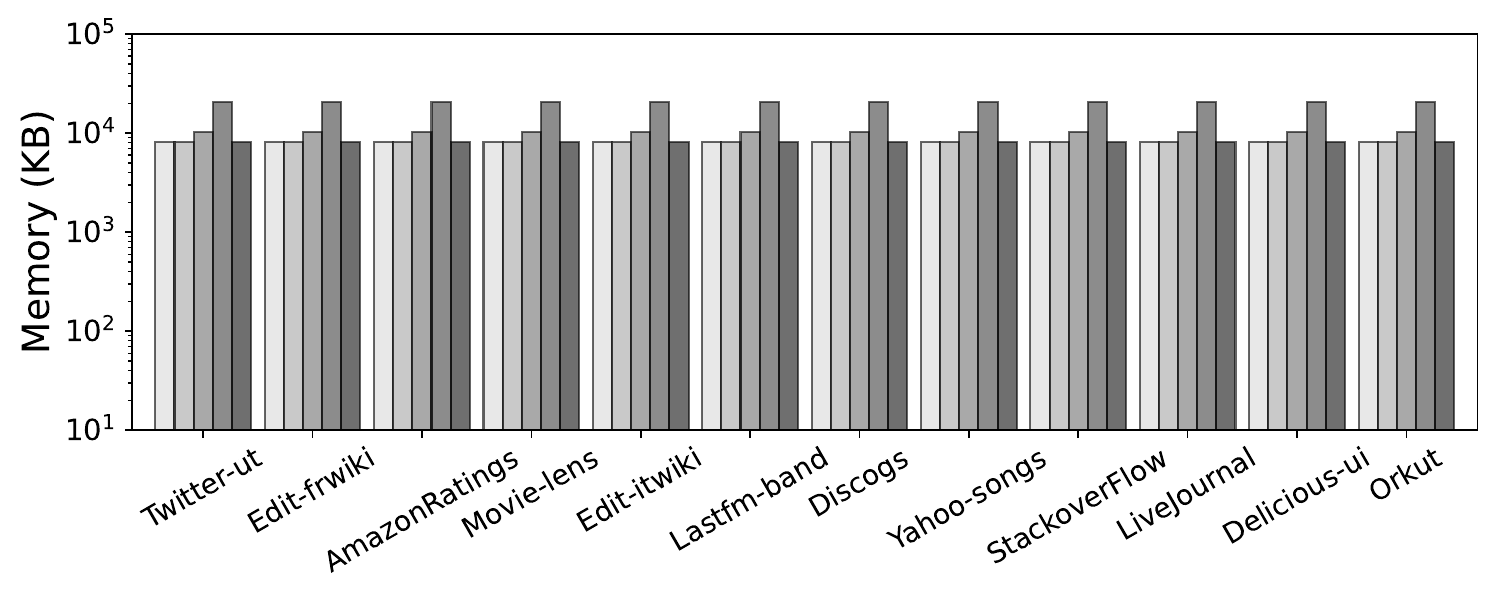}
        \vspace{-2em}
        \subcaption{Sample Size = 20}
    \end{minipage}
    

    \caption{Memory Usage over All Datasets}
    \vspace{-1.2em}
    \label{fig:memory}
\end{figure*}
\begin{figure}
    \centering
     \vspace{-0.5em}
    \includegraphics[width=0.48\textwidth]{figure/accuracy/de_accuracy_legend_only.pdf}

    \vspace{-0.5em}
    
    \begin{minipage}{0.22\textwidth}
        \centering
        \includegraphics[width=\textwidth]{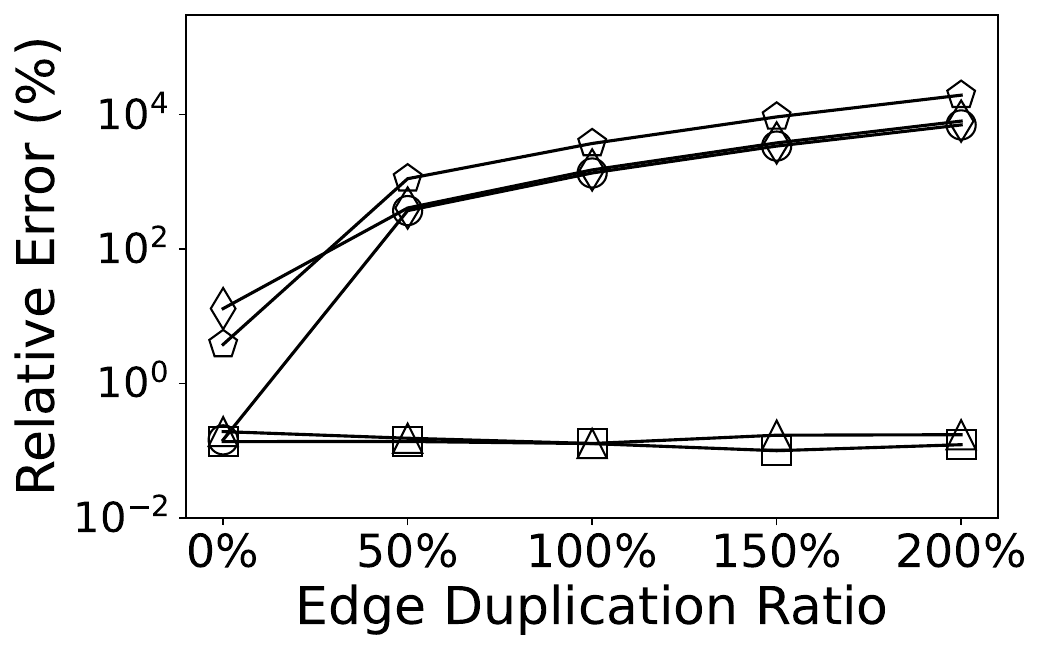}
        \subcaption{Movie-lens}
    \end{minipage}
    \hspace{0.015\linewidth}
    \begin{minipage}{0.22\textwidth}
        \centering
        \includegraphics[width=\textwidth]{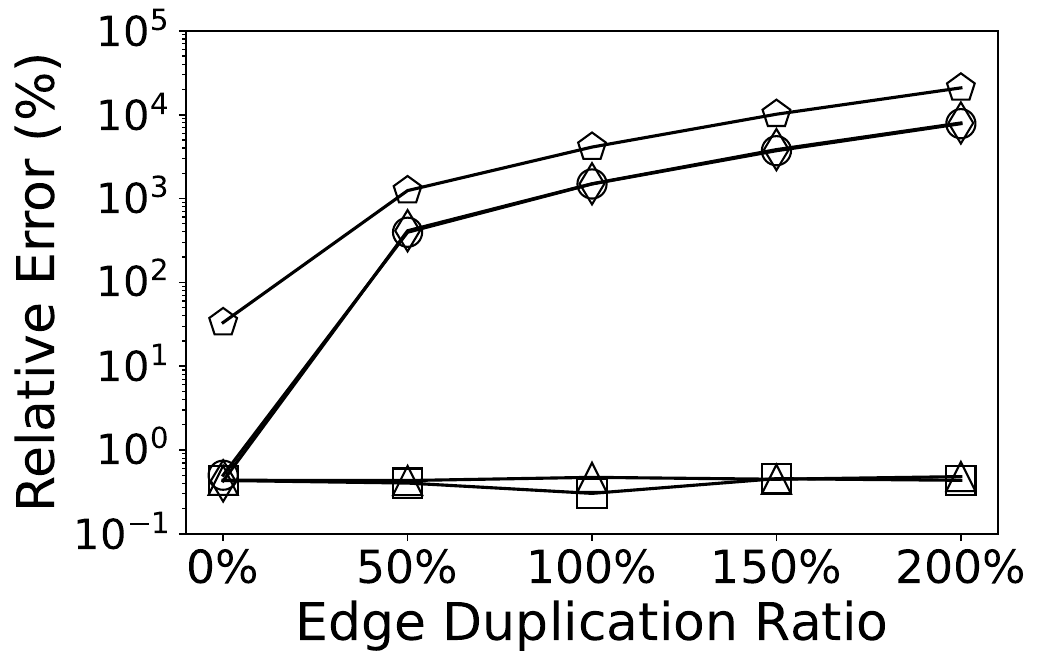}
        \subcaption{Yahoo-songs}
    \end{minipage}

    
    \begin{minipage}{0.22\textwidth}
        \centering
        \includegraphics[width=\textwidth]{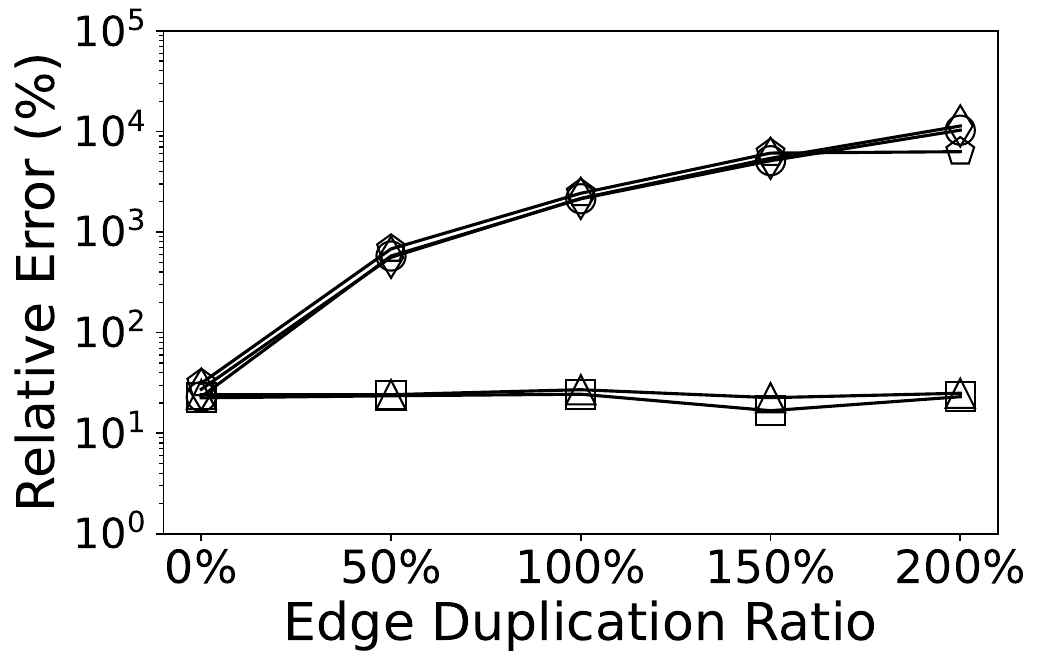}
        \subcaption{LiveJournal}
    \end{minipage}
    \hspace{0.015\linewidth}
    \begin{minipage}{0.22\textwidth}
        \centering
        \includegraphics[width=\textwidth]{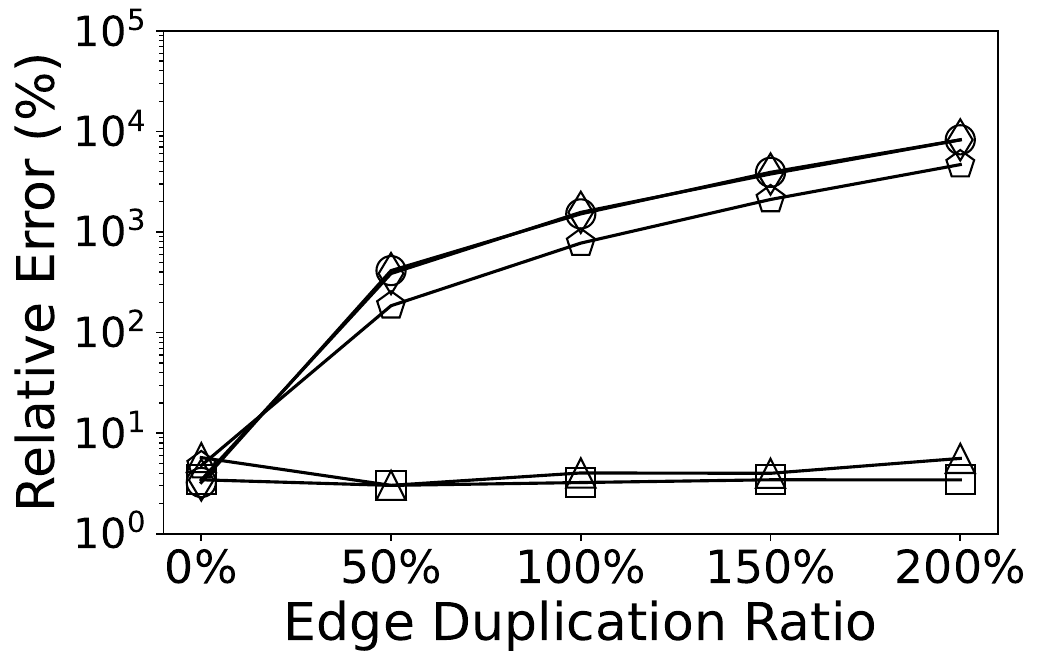}
        \subcaption{Orkut}
    \end{minipage}

  \vspace{-0.5em}
    
    \caption{Relative Error across Varying Edge Duplication Ratios (Sample Size = 18)}
    \vspace{-2em}
    \label{fig:dup}
\end{figure}


\stitle{Algorithms.} We evaluate the performance of our method \deabcpro in comparison with three state-of-the-art methods for bufferfly counting over bipartite graph streams: \abacus~\cite{DBLP:conf/icde/PapadiasKPQM24}, \fleet~\cite{DBLP:conf/cikm/Sanei-MehriZST19}, \cas~\cite{DBLP:journals/tkde/LiWJZZTYG22}, and \fable~\cite{sun2024fable}.
\begin{itemize}[leftmargin=*]

    \item \textbf{\abacus} is a streaming butterfly counting algorithm based on random sampling, which is proposed in~\cite{DBLP:conf/icde/PapadiasKPQM24}.

    \item \textbf{\fleet} is proposed in \cite{DBLP:conf/cikm/Sanei-MehriZST19}. We use the recommended optimal value of $\gamma=0.75$ as the reservoir resizing parameter.

    \item \textbf{\cas} is the best method in~\cite{DBLP:journals/tkde/LiWJZZTYG22} and we use the recommended optimal value of $\lambda=0.75$ as the ratio of memory usage of AMS sketch to total memory.

    \item \textbf{\fable} is proposed in~\cite{sun2024fable}, which uses \kmv sketch to process duplicate edges in a bipartite graph stream.
    
\end{itemize}

\stitle{Evaluation Metrics.} Our evaluation focuses on two primary metrics following~\cite{DBLP:conf/icde/PapadiasKPQM24,DBLP:conf/cikm/Sanei-MehriZST19,DBLP:journals/tkdd/SheshboloukiO22}: \textit{Relative Error} and \textit{Throughput}. \textit{Relative Error} (the lower the better) is used to measure the accuracy of the predictions by comparing the difference between the predicted and actual values relative to the actual value. Its formula is that
$
\textit{Relative Error} = \frac{|c_{\Join} - \hat{c_{\Join}}|}{c_{\Join}}  \times 100\%
$

\textit{Throughput} (the larger the better), on the other hand, measures the efficiency of the algorithm by calculating the number of edges processed per second.

\stitle{Exp-1: Accuracy.}
We evaluate the accuracy of estimating butterfly counts across all datasets. The relative error was computed for different sample sizes, which ranged from $2^{12}$ to $2^{20}$ edges. For convenience, we represent the sample size using powers of $2$. The experiments were repeated 100 times and the average was calculated. The results are shown in Figure~\ref{fig:exp_accuracy}.

Across all datasets, our proposed algorithm, \deabcpro, consistently performs better than all baseline algorithms (\abacus, \fleet, \cas, and \fable) in terms of relative error. In several datasets, the relative error of our algorithm drops below 0.1\%. Notably, in most datasets, our approach achieves a 2-4 orders of magnitude improvement compared to those that do not consider duplicate edges, underlining the critical impact of duplicate edges on butterfly counting accuracy. For example, in \amazonratings, where the duplicate edge rate is only 1\%, the relative error of our algorithm is approximately one-tenth of that of the baseline algorithms.

Compared to \fable, \deabcpro performs better in all datasets and sample size settings, especially when the sample space is small, which is consistent with the results of our variance analysis. This highlights \deabcpro’s effectiveness and accuracy, even under memory constraints. When the sampling space is particularly small, it becomes challenging to sample any butterflies, causing the relative error of both algorithms to approach 100\%.

\stitle{Exp-2: Throughput.}
We evaluate the throughput performance across all datasets for different sample sizes, specifically $2^{14}, 2^{16}, 2^{18}$, and $2^{20}$. Additionally, we also explored how throughput changes with varying sample sizes, ranging from $2^{12}$ to $2^{20}$. Due to space constraints, we only show the results of the four largest datasets (\stackoverflow, \livejournal, \deliciousui, and \orkut), and similar trends were observed across all datasets. The throughput results are shown in Figures~\ref{fig:throughput_dataset} and Figures~\ref{fig:throughput_size}.


Figure~\ref{fig:throughput_dataset} shows the throughput performance across all datasets. We can observe that \deabcpro maintains high throughput across most datasets and often outperforms the other baseline algorithms. Compared to \fable, our \deabcpro algorithm consistently maintains superior throughput performance. As the sampling space $M$ increases, this advantage expands, which is consistent with the results of our time complexity analysis.


From Figures~\ref{fig:throughput_size}, we can observe that the throughput of all algorithms decreases as the sample size increases, with \deabcpro performing the best. 
Additionally, as the sample size grows, the rate of performance decline for our \deabcpro is slower compared to other baseline algorithms, demonstrating the strong scalability of our algorithms.


\stitle{Exp-3: The Estimated Number of Butterflies over Time.}
In this experiment, we analyzed the performance of \deabcpro over time. Due to space constraints, we only show three datasets, \editfrwiki, \amazonratings, and \stackoverflow, which are real-world datasets with duplicate edges. For each dataset, we uniformly set $100$ time snapshots, with an equal number of edges processed between consecutive snapshots. The results are shown in Figure~\ref{fig:bc_over_time}. 

Our \deabcpro maintains low error at each time snapshot. As the number of edges in the data stream increases, the estimated number of butterflies also increases, and the results of both algorithms across different time snapshots are very close to the true number of butterflies, making them well-suited for real-time graph processing tasks.

\stitle{Exp-4: Memory Usage.}
We evaluate the memory usage of the five algorithms (\abacus, \fleet, \cas, \fable, and \deabcpro) across all datasets. The evaluation was conducted using four different sample sizes: $2^{14}$, $2^{16}$, $2^{18}$, and $2^{20}$. The memory usage of each algorithm is measured in kilobytes (KB), and the results, shown in Figure~\ref{fig:memory}, are presented on a logarithmic scale for better visualization of the differences across datasets.

The memory usage of \abacus, \fleet, and \deabcpro is similar and increases linearly with the sample size. For sample sizes of $2^{14}$, $2^{16}$, $2^{18}$, and $2^{20}$, they consume approximately 128KB, 512KB, 2048KB, and 8192KB, respectively. This is because they only need to store the sampled edges, and their memory usage is independent of the dataset size; even when processing very large datasets, only the memory space for the sample size is required.
The \cas algorithm requires additional data structures, resulting in approximately 25\% higher memory usage compared to \abacus, \fleet, and \deabcpro. \fable algorithm has the highest memory consumption, about double that of other algorithms, due to the need for an additional priority queue to store edge data and priority information.



\stitle{Exp-5: The Impact of Edge Duplication.}
In this experiment, we evaluated the relative error performance of the five algorithms under varying edge duplication ratios. By setting different edge duplication ratios (0\%, 50\%, 100\%, 150\%, and 200\%), we observed the behavior of each algorithm on four datasets: \movielens, \yahoosongs, \livejournal, and \orkut, with a fixed sample size of $2^{18}$.  The results are shown in Figure~\ref{fig:dup}. 


The relative error of \abacus, \fleet, and \cas increases rapidly as the edge duplication ratio grows, especially when the duplication ratio rises from 0\% to 50\%. In contrast, the algorithms, \fable and \deabcpro, demonstrate significantly better performance, and they are almost unaffected by changes in the edge duplication ratio, while our \deabcpro still performs better than \fable. Furthermore, even when handling non-duplicate edge datasets (the duplication ratio is 0\%), our algorithm achieves comparable or even better relative error rates than the baseline algorithms.

\section{Related Work}
\label{sec:related_work}

\stitle{Butterfly Counting over Static Graphs.}
Wang et al.~\cite{DBLP:conf/bigdata/WangFC14} proposed the first exact butterfly counting algorithm using wedge enumeration to process two-hop neighbors. Sanei-Mehri et al.~\cite{DBLP:conf/kdd/Sanei-MehriST18} improved it by traversing from the vertex set with smaller squared-degree sums and further introduced randomized algorithms using local sampling and sparsification for approximate counting.
In subsequent research, Wang et al.~\cite{DBLP:journals/pvldb/WangLQZZ19} reduced wedge enumerations with vertex ordering, while Shi et al.~\cite{DBLP:books/crc/22/ShiS22} improved efficiency using a parallel framework. However, these methods target static bipartite graphs and struggle with streaming settings, where dynamic edge insertions alter vertex priorities. They also require loading the full graph into memory, limiting scalability.
The Matrix-Based Butterfly Count method, proposed by Jay A. Acosta et al.~\cite{DBLP:conf/ipps/AcostaLP22}, is efficient but suffers from unacceptable memory overhead due to storing the adjacency matrix. Although the I/O-efficient butterfly counting algorithm~\cite{DBLP:journals/pacmmod/WangLLS0023,DBLP:journals/vldb/WangLLSTZ24} addresses the memory issue, frequent I/O exchanges reduce computational efficiency, making it challenging to achieve real-time butterfly counting.
Other advancements include GPU-based methods~\cite{DBLP:journals/pvldb/XuZYLDDH22,DBLP:journals/vldb/XiaZXZYLDDHM24} that use adaptive strategies to balance GPU thread workloads. Zhou et al.\cite{DBLP:journals/vldb/ZhouWC23} explored butterfly counting in uncertain graphs, while Cai et al.\cite{DBLP:journals/pvldb/CaiKWCZLG23} focused on temporal butterfly counting in temporal bipartite graphs. Additionally, methods addressing butterfly counting with differential privacy have also been proposed~\cite{DBLP:conf/infocom/Wang00LH24,DBLP:conf/icde/HeW0LNZ24}. However, these approaches are also designed for static graphs and do not work in streaming environments.

\stitle{Bufferfly Counting over Streaming Graphs.}
Sanei-Mehri et al.~\cite{DBLP:conf/cikm/Sanei-MehriZST19} proposed \kw{FLEET}, which uses adaptive sampling to estimate butterfly counts in bipartite streams with fixed memory. Li et al.~\cite{DBLP:journals/tkde/LiWJZZTYG22} introduced Co-Affiliation Sampling (\kw{CAS}), combining sampling and sketching for accurate estimates, and further improved it with \kw{CAS}-\kw{R} by incorporating reservoir sampling.
Sheshbolouki et al.~\cite{DBLP:journals/tkdd/SheshboloukiO22} proposed \kw{sGrapp}, a window-based adaptive algorithm for butterfly counting that captures temporal patterns in real-time streams. Papadias et al.~\cite{DBLP:conf/icde/PapadiasKPQM24} applied random sampling to fully dynamic graph streams. However, these methods cannot handle duplicate edges, as they cannot detect whether an inserted edge was already sampled, resulting in incorrect sampling probabilities.

Sun et al. \cite{sun2024fable} avoid the impact of duplicate edges on butterfly counting by computing the number of distinct edges by maintaining an ordered list of edge priorities for replacement and sampling. However, compared to this approach, our method achieves lower variance, higher efficiency, and reduced memory requirements.
There are other algorithms~\cite{DBLP:journals/pacmmod/ZhangCWYG23,DBLP:journals/corr/abs-2310-11886} rely on global sampling (e.g., vertex, edge, or wedge sampling) and are unsuitable for streaming settings.

\section{Conclusion}
\label{sec:conclusion}

In this paper, we present \deabcpro, a priority-based sampling algorithm for counting butterflies in streaming bipartite graphs with duplicate edges. It handles duplicates effectively while providing unbiased estimates with low variance. We prove its theoretical guarantees and validate its performance on real-world datasets. Results show that \deabc remains efficient and accurate even under memory constraints, making it well-suited for large-scale dynamic streaming graph analysis.


\section*{Acknowledgments}
Long Yuan is supported by NSFC62472225.
Xuemin Lin is supported by NSFC U2241211 and U20B2046.

\bibliographystyle{IEEEtran}
\bibliography{sample-base}

\end{document}